\definecolor{DarkGray}{rgb}{0.1,0.1,0.5}
\newcommand{\bra}[1]{{\langle#1|}}
\newcommand{\ket}[1]{{|#1\rangle}}
\newcommand{\braket}[2]{{\langle#1|#2\rangle}}
\newcommand{\proj}[1]{| #1 \rangle\!\langle #1 |}
\newcommand{\abs}[1]{{\lvert #1\rvert}}	
\newcommand{\norm}[1]{{\lVert #1 \rVert}}
\newcommand{\trnorm}[1]{{\lVert #1 \rVert_{\mathrm{tr}}}}
\newcommand{\Tr}{\mathrm{Tr}}
\DeclareMathOperator{\Ex}{\operatorname{E}}
\def\C {{\bf C}}
\def\D {{\mathcal D}}
\def\F {{\mathcal F}}
\def\H {{\mathcal H}}
\def\P {{\mathcal P}}
\def\N {{\bf N}}
\def\R {{\bf R}}
\newcommand{\MADV}{\mathrm{Madv}}
\newcommand{\evector}{v}
\newcommand{\ADV} {\mathrm{Adv}}
\newcommand{\blah} {\mathrm{Adv}^*}
\newcommand{\B}{\{0,1\}}	
\newcounter{sprows}
\newlength{\spheight}
\newlength{\spraise}
\newlength{\commentslength}
\newcommand{\rem}[1]{}
\newcommand{\ignore}[1]{}
\newtheorem{theorem}{Theorem}[section]
\newtheorem{lemma}[theorem]{Lemma}
\newtheorem{corollary}[theorem]{Corollary}
\newtheorem{claim}[theorem]{Claim}
\newtheorem{definition}[theorem]{Definition}
\newtheorem{remark}[theorem]{Remark}
\newfont{\subsubsecfnt}{ptmri8t at 10pt}
\renewcommand{\subparagraph}[1]{\smallskip{\subsubsecfnt #1.}}
\numberwithin{equation}{section} 
\newcommand{\eqnref}[1]{\hyperref[#1]{{(\ref*{#1})}}}
\newcommand{\thmref}[1]{\hyperref[#1]{{Theorem~\ref*{#1}}}}
\newcommand{\lemref}[1]{\hyperref[#1]{{Lemma~\ref*{#1}}}}
\newcommand{\corref}[1]{\hyperref[#1]{{Corollary~\ref*{#1}}}}
\newcommand{\defref}[1]{\hyperref[#1]{{Definition~\ref*{#1}}}}
\newcommand{\secref}[1]{\hyperref[#1]{{Section~\ref*{#1}}}}
\newcommand{\figref}[1]{\hyperref[#1]{{Figure~\ref*{#1}}}}
\newcommand{\tabref}[1]{\hyperref[#1]{{Table~\ref*{#1}}}}
\newcommand{\remref}[1]{\hyperref[#1]{{Remark~\ref*{#1}}}}
\newcommand{\appref}[1]{\hyperref[#1]{{Appendix~\ref*{#1}}}}
\newcommand{\claimref}[1]{\hyperref[#1]{{Claim~\ref*{#1}}}}
\newcommand{\propref}[1]{\hyperref[#1]{{Proposition~\ref*{#1}}}}
\newcommand{\exampleref}[1]{\hyperref[#1]{{Example~\ref*{#1}}}}
\newcommand{\conjref}[1]{\hyperref[#1]{{Conjecture~\ref*{#1}}}}
\begin{document}
\title{A strong direct product theorem for quantum query complexity}
\author{%
Troy Lee \thanks{Centre for Quantum Technologies}%
\and%
 J\'{e}r\'{e}mie Roland \thanks{NEC Laboratories America}}

\maketitle

\begin{abstract}
We show that quantum query complexity satisfies a strong direct product theorem.  This 
means that computing $k$ copies of a function with less than $k$ times the 
quantum queries needed to compute one copy of the function implies that the overall success 
probability will be exponentially small in $k$.  For a boolean function $f$ we also show an XOR 
lemma---computing the parity of $k$ copies of $f$ with less than $k$ times the 
queries needed for one copy implies that the advantage over random guessing will be 
exponentially small.  

We do this by showing that the multiplicative adversary method, 
which inherently satisfies a strong direct product theorem, is always at least as large as the 
additive adversary method, which is known to characterize quantum query complexity.
\end{abstract}

\section{Introduction}
We show that quantum query complexity satisfies a strong direct product theorem.  A strong 
direct product theorem states that to compute $k$ copies of a function with less than $k$ times 
the resources needed to compute one copy of the function implies that the success probability will 
be exponentially small in $k$.  For boolean functions, we further show an XOR lemma.  
XOR lemmas are closely related to strong direct product theorems and state that  
computing the parity of $k$ copies of a boolean function with less than $k$ times 
the resources needed to compute one copy implies that the advantage over random guessing 
will be exponentially small.  XOR lemmas can be shown quite generally to imply strong direct 
product theorems and even threshold direct product theorems \cite{Unger09}, which state that 
one cannot compute a $\mu$ fraction of the $k$ copies with less than $\mu k$ times the 
resources with better than exponentially small (in $\mu k$) success probability.  Thus in the 
boolean case we are also able to obtain a threshold direct product theorem.   

How the resources needed to compute $k$ copies of a function scale with those needed for 
one copy is a very natural question that has been asked of many computational models.  
While direct product theorems are intuitively highly plausible, they do not hold in all models 
\cite{Sha03}, and there are relatively few models where strong direct product theorems are known.
Notable examples of direct product-type results include Yao's XOR lemma and 
Raz's parallel repetition theorem \cite{Raz98}.  Closer to our setting, strong direct 
product theorems have been shown for one-way randomized communication complexity 
\cite{Jain10} and for randomized query complexity \cite{Dru11}.  

In quantum query complexity strong direct product theorems were previously known for 
some special classes of functions and bounds shown by particular methods.  In the first 
such result, 
Klauck, \v{S}palek and de Wolf~\cite{KSW07} used the polynomial 
method \cite{BBC+98} to show a strong direct product theorem for the quantum query complexity 
of the OR function.  Via block sensitivity, this gives a polynomially tight strong direct product theorem for all functions---namely, any algorithm using less than a constant fraction times 
$k Q(f)^{1/6}$ will have exponentially small success probability for computing $k$ copies of $f$.  

Sherstov~\cite{She10} recently showed how certain 
lower bound techniques based on looking at the distance of the function to a convex set 
inherently satisfy a strong direct product theorem.  As an application he was able to show that 
the polynomial method satisfies a strong direct product theorem {\em in general}.  Thus one 
obtains a strong direct product theorem for the quantum query complexity of any function 
where the polynomial method shows a tight lower bound. 
Super-linear gaps between the polynomial degree and quantum query complexity are known 
\cite{Amb06}, however, so this does not give a tight strong direct product theorem for all functions.  

Direct product results have also been shown by the other main lower bound technique in quantum 
query complexity, the adversary method.  The adversary method defines a potential function 
based on the state of the algorithm after $t$ queries, and bounds the change in this potential 
function from one query to the next.  
By developing a new kind of adversary method, 
Ambainis, \v{S}palek, and de Wolf~\cite{ASW06} showed a strong direct product theorem for all 
symmetric functions.  \v{S}palek~\cite{Spa08} formalized this technique into 
a generic method, coining it the multiplicative adversary method, and showed that this method 
inherently satisfies a strong direct product theorem.  The name multiplicative adversary 
contrasts with the additive adversary method, introduced earlier by Ambainis~\cite{Amb02} and 
later extended by H{\o}yer, Lee and {\v{S}}palek~\cite{HLS07}.  The additive adversary method 
bounds the difference of the potential function from one step to the next, while the 
multiplicative adversary method bounds the corresponding ratio.  

There have recently been great strides in our understanding of the adversary methods.  
A series of works~\cite{FGG08,CCJY09,ACRSZ10,RS08,Rei09,Rei10,LMRS10} has 
culminated in showing that the additive adversary method characterizes the bounded-error 
quantum query complexity of any function whatsoever.  
Ambainis \textit{et al.}~\cite{AMRR11}, answering an open question of  
\v{S}palek~\cite{Spa08}, showed that the multiplicative adversary is at least as large as the 
additive.  Thus the multiplicative adversary bound also characterizes bounded-error 
quantum query complexity.  

This seems like it would close the question of a strong direct product theorem for quantum 
query complexity.  The catch is the following.  The multiplicative adversary method can be viewed as a family of methods parameterized by the bound $c$ on the ratio of the potential function from one step to the next.  The strong direct product theorem of~\cite{Spa08} holds for any value of $c$ sufficiently bounded away from $1$.  The result of~\cite{AMRR11}, however, was shown in the limit $c \rightarrow 1$, which ends up degrading the resulting direct product theorem into a direct sum theorem.  
We show that the multiplicative adversary is at least as large as 
the additive adversary for a value of $c$ bounded away from $1$.  A similar 
result was independently observed by Belovs \cite{Belovs11}.  Together with the strong direct product theorem 
for the multiplicative adversary by \cite{Spa08} this suffices to give a strong direct product theorem 
for quantum query complexity.  Rather than use this ``out of the box'' strong direct product theorem,
however, we prove the strong direct product theorem from scratch using a stronger output 
condition than those used previously \cite{Spa08, AMRR11}.  This results in better parameters,
and a better understanding of the multiplicative adversary method.

\begin{theorem}[Strong direct product theorem]
\label{thm:SDPT-main}
Let $f: \D \rightarrow E$ where $\D \subseteq D^n$ for finite sets $D,E$.  
For an integer $k > 0$ define $f^{(k)}(x^1, \ldots, x^k)=(f(x^1), \ldots, f(x^k))$. 
Then, for any $(2/3)\leq\delta\leq 1$,
\begin{align*}
Q_{1-\delta^{k/2}}(f^{(k)})
\ge \frac{k\ln(3\delta/2)}{8000}\cdot Q_{1/4}(f) 
\enspace .
\end{align*}
\end{theorem}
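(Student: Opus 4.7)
The plan is to route the proof through the multiplicative adversary method. Two features of this method are crucial: (a) it behaves well under tensor powers of the adversary matrix, which ultimately produces the factor of $k$; and (b) at parameters bounded away from the degenerate limit, it still dominates the additive adversary $\ADV(f) = \Theta(Q_{1/4}(f))$. Neither property is available off the shelf at the strength needed here: the comparison of \cite{AMRR11} is proved only in the limit of per-query ratio $c \to 1$ and yields a direct sum but not a direct product, while the generic SDPT of \cite{Spa08} suffers losses at the output-condition stage. I would therefore rebuild both components while keeping quantitative control.

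First, starting from an optimal additive adversary matrix $\Gamma$ for $f$ with $\|\Gamma\| = \Theta(Q_{1/4}(f))$, I would construct a multiplicative adversary of the form $\Gamma' = I + \eta\,\Gamma$ (or $\exp(\eta\Gamma)$, whichever gives tighter constants) with $\eta = \Theta(1/\|\Gamma\|)$. Using the defining property $\|\Gamma \circ D_i\| \le 1$ of the additive adversary, a single query changes the potential $W_t = \Tr(\Gamma'\rho_t)$ by at most a factor $c = 1 + O(\eta)$, so that $\ln c = \Theta(1/Q_{1/4}(f))$. This realizes a multiplicative adversary whose per-query ratio is bounded away from $1$ by an amount controlled by the additive adversary bound.

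Next I would lift to $f^{(k)}$ by using $(\Gamma')^{\otimes k}$ as adversary. Since each query touches only one register, the per-query factor remains $c$; the largest eigenvalue of the adversary matrix is $(1+\eta\|\Gamma\|)^k$, which for the chosen $\eta$ is at least $2^k$. The main technical step, and the novel output condition alluded to in the introduction, is to show that any algorithm achieving success probability $\delta^{k/2}$ on $f^{(k)}$ drives $W_T/W_0$ to at least a quantity of order $(2\sqrt{\delta}\,)^k$, equivalently $\exp\bigl(\Omega(k\ln(3\delta/2))\bigr)$. The mechanism is that the correct-output projector lies in the top eigenspace of $(\Gamma')^{\otimes k}$, so even a small success probability forces significant weight on high-eigenvalue components; extracting the full $k\ln(3\delta/2)$ factor requires decomposing the final state across the product eigenspaces indexed by the subset of coordinates the algorithm gets right and bounding each contribution. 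This is precisely where previous SDPTs degrade into direct sum statements, and I expect it to be the principal technical obstacle.

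Combining the three pieces yields
\[
T \;\ge\; \frac{\ln(W_T/W_0)}{\ln c} \;\ge\; \Omega\bigl(k\ln(3\delta/2)\bigr)\cdot \|\Gamma\| \;=\; \Omega\bigl(k\ln(3\delta/2)\cdot Q_{1/4}(f)\bigr),
\]
and careful tracking of the multiplicative losses through each step produces the explicit constant $1/8000$ appearing in the theorem statement.
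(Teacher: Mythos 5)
Your architecture matches the paper at the coarse level: convert an additive adversary witness into a multiplicative one with per-query ratio $c=1+\Theta(1/\blah(F))$, tensor the witness for $f^{(k)}$, and feed the final state through an output condition strong enough to survive exponentially small success probability, then close the loop with $Q_{1/4}(f)=\Theta(\blah(F))$. Two of the three pieces are essentially what the paper does (\claimref{claim:madvadv} constructs $\Gamma_m=I+\gamma(dI-\Gamma)$, and \thmref{thm:product_sigma} tensors it; note you need the shift by $dI$, not $I+\eta\Gamma$, since $\Gamma$ is Hermitian with negative eigenvalues and you must keep $\Gamma_m\succ 0$). The gap is in the third piece, which you correctly flag as ``the principal technical obstacle'' but then sketch with a mechanism that is not the paper's and that you do not actually carry out.

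You propose to ``decompose the final state across the product eigenspaces indexed by the subset of coordinates the algorithm gets right and bound each contribution,'' on the grounds that the correct-output projector sits in the top eigenspace of $\Gamma_m^{\otimes k}$. This is the combinatorial route of earlier direct-product proofs, and it is precisely where those arguments lose the exponential dependence on $k$. The paper sidesteps it entirely. The actual output condition (\claimref{claim:fidelity}) says a $(1-\epsilon)$-successful coherent algorithm produces a Gram matrix $\rho$ with $\min_u \F(\rho\circ uu^*,\sigma\circ uu^*)\ge\sqrt{1-\epsilon}$, i.e.\ the success condition is \emph{exactly} a fidelity bound between Schur-weighted Gram matrices. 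Because fidelity is monotone under measurement, the quantum condition implies the fidelity bound $\F(p^{\otimes k},q)\ge\delta^{k/2}$ for the classical distributions $p,q$ obtained by measuring $\Gamma_m$ on $\sigma\circ vv^*$ and $\Gamma_m^{\otimes k}$ on $\rho\circ(vv^*)^{\otimes k}$. The remaining step is a purely classical SDP-duality fact (\lemref{lem:expectation-fidelity}, \claimref{claim:bound-expectation}, \corref{cor:expectation-fidelity}): if $\F(p^{\otimes k},q)\ge\delta^{k/2}$ and $A\in[a_0,a_1]$ with $E_p[A]=\bar a$, then $E_q[\prod_l A_l]\ge\bigl(\delta a_0a_1/(a_0+a_1-\bar a)\bigr)^k$. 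Plugging in $a_0=1$, $a_1=1+2\gamma d$, $\bar a=1+\lambda\gamma d$ gives the objective value directly, with no need to chase eigenspaces by hand. Without this fidelity reformulation (or an equivalent device), your sketch does not establish the lower bound on $W_T/W_0$, so the proof as written is incomplete at its central step.

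There is also a secondary point you glide over: to get clean control of $\Tr(\Gamma_m(\sigma\circ vv^*))$ one needs a witness with $\Gamma\circ(J-\sigma)=\lambda\Gamma$ (\claimref{claim:witness}), which uses the structural hypothesis $(J-\sigma)\circ(J-\sigma)=\lambda(J-\sigma)$ (true for $F$ with $\lambda=1$). An arbitrary optimal additive witness need not have $v$ as a common principal eigenvector of $\Gamma$ and $\Gamma\circ(J-\sigma)$, so the objective-value computation in your sketch would not go through as stated.
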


In the boolean case, we prove the following XOR lemma which also implies a threshold direct 
product theorem (\thmref{tdpt}).
\begin{lemma}[XOR Lemma]
\label{lem:XOR}
Let $f: \D \rightarrow \B$ where $\D \subseteq D^n$ for finite set $D$.  
For an integer $k > 0$ and any $0\leq\delta\leq 1$,
\begin{align*}
Q_{(1-\delta^{k/2})/2}(\oplus\circ f^{(k)})
\ge \frac{k\delta}{8000}\cdot Q_{1/4}(f) 
\enspace .
\end{align*}
\end{lemma}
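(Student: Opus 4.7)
The plan is to prove the XOR lemma by adapting the multiplicative adversary strategy used to establish Theorem~\ref{thm:SDPT-main}, but equipped with an output condition tailored to the parity problem rather than to recovering all $k$ function values. The high-level structure is: fix an optimal additive adversary for $f$, lift it to a multiplicative adversary on the $k$-fold tensor space, track a potential function through the algorithm, and combine a per-query progress bound with an XOR-specific output condition.

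For the first step, fix an optimal additive adversary matrix $\Gamma$ witnessing $Q_{1/4}(f)$, which exists by the additive-adversary characterization of bounded-error quantum query complexity. Normalize $\Gamma$ so that $\|\Gamma\|=1$ and the query cost $\max_i \|\Gamma\circ D_i\|$ is $O(1/Q_{1/4}(f))$. Because $f$ is boolean, one can arrange that the nontrivial eigenspaces of $\Gamma$ carry a natural sign distinguishing $f(x)=0$ from $f(x)=1$. Next, lift $\Gamma$ to a multiplicative adversary $\Gamma_k = G^{\otimes k}$ on the $k$-fold tensor space, with $G = I + c\Gamma$ for a constant $c$ bounded away from zero, just as in the Ambainis--Magnin--Roetteler--Roland style construction used by the paper's SDPT proof. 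The ``bad'' projector is redefined for XOR: $\Pi_\bad$ projects onto basis states whose output register disagrees with $\bigoplus_i f(x^i)$, rather than disagreeing on any coordinate.

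Third, track the potential $\Phi_t = \bra{\psi_t}\Gamma_k\ket{\psi_t}$ across the algorithm's $T$ queries. Because each $\Gamma\circ D_i$ has small norm and the tensor structure of $\Gamma_k$ propagates this cleanly, every query multiplies $\Phi_t$ by at most $1+O(1/Q_{1/4}(f))$, giving an overall progress bound $\log(\Phi_T/\Phi_0) \le T\cdot O(1/Q_{1/4}(f))$. Conversely, the XOR output condition---that the final state has overlap $(1+\delta^{k/2})/2$ with the correct-parity subspace---will be shown to force $\Phi_T/\Phi_0 \ge (1+\Omega(\delta))^k$, i.e.\ $\log(\Phi_T/\Phi_0) = \Omega(k\delta)$. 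Combining the two bounds yields $T \ge \Omega(k\delta\cdot Q_{1/4}(f))$, matching the claim up to the universal constant $1/8000$.

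The main obstacle will be proving the output condition precisely: extracting a \emph{linear-in-$\delta$} factor per copy from an advantage of only $\delta^{k/2}$ on a single parity bit, rather than the $\ln(3\delta/2)$ factor that comes out of the SDPT analysis. Intuitively, a parity bias of $\delta^{k/2}$ forces the final state to be nontrivially aligned with the tensor-product eigenspace of $\Gamma_k$ whose sign pattern on the $k$ factors matches the true outputs $f(x^1),\ldots,f(x^k)$; a Cauchy--Schwarz (or Fourier-on-the-cube) argument should then translate this alignment into the required factor $(1+\Omega(\delta))^k$ in the potential. Matching this with the per-query progress bound from the tensor construction---while avoiding the kind of $\delta \to 1$ degeneration that caused the Ambainis \emph{et al.}\ proof to give only a direct sum bound---is where the real work lies, and it is precisely the step that explains why the XOR lemma carries a $k\delta$ rather than $k\ln(3\delta/2)$ factor.
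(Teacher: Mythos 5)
Your high-level skeleton---take an optimal additive adversary for $f$, lift it to a tensor-power multiplicative adversary $\Gamma_m^{\otimes k}$ with $\Gamma_m = I + \gamma(dI-\Gamma)$, and balance a per-query progress bound against an output condition---matches the paper's engine. But the XOR lemma is exactly the place where this skeleton alone does not suffice, and the step you flag as ``where the real work lies'' is in fact a genuine gap, for a concrete structural reason you have not addressed.

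If you keep the output as a one-qubit parity register and redefine $\Pi_\bad$ to be the wrong-parity subspace, the target Gram matrix for the problem $\oplus\circ f^{(k)}$ is $F_\oplus(\vec x,\vec y)=\delta_{\oplus_i f(x^i),\,\oplus_i f(y^i)} = \tfrac{1}{2}\bigl(J + \sigma_f^{\otimes k}\bigr)$, which is \emph{not} a tensor power. Consequently, a final Gram matrix $\rho$ that is $\sqrt{1-\epsilon}$-close to $F_\oplus$ in fidelity need not be close to any $k$-fold tensor product, and the output-side argument---whether via a projector decomposition or your proposed Cauchy--Schwarz/Fourier step---has no tensor structure to hook into. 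This is precisely the obstruction that made the earlier multiplicative-adversary treatments degrade to a direct sum theorem, and nothing in your sketch removes it. The paper sidesteps the issue entirely: it replaces computing $\oplus\circ f^{(k)}$ in a register with \emph{coherently generating the phase state} $\sigma_f^{\otimes k}$, where $\sigma_f(x,y)=(-1)^{f(x)+f(y)}$. That target \emph{is} a clean tensor power, so Theorem~\ref{thm:product_sigma} (with the fidelity output condition of \claimref{claim:fidelity} and the product-fidelity bound of \corref{cor:expectation-fidelity}) applies directly with $\lambda=2$; the translation back to a register-output parity algorithm is handled by \claimref{claim:computing-phase}, at the cost of a constant factor. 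You never invoke this reduction, and without it I do not see how to make your output condition give $(1+\Omega(\delta))^k$ rather than the weaker $(1+\Omega(\delta/k))^k$ that direct-sum arguments yield.

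Two secondary points. First, the linear-in-$\delta$ factor in the paper is not extracted by a Fourier or Cauchy--Schwarz argument; it comes from the parameter choice $\gamma=1/(\delta d)$ in \claimref{claim:sdpt-phase-boolean} combined with the value $\lambda=2$ (from $(J-\sigma_f)\circ(J-\sigma_f)=2(J-\sigma_f)$), which makes the per-copy objective gain $1+2\gamma d = 1 + 2/\delta$ so that $\ln(\delta(1+2\gamma d))=\ln(\delta+2)\ge\ln 2$ stays bounded below while $\ln(1+\gamma)\approx\gamma=1/(\delta d)$ supplies the $\delta d$ factor. Second, the constant $8000$ in the statement comes from the factor $8$ in the phase-version corollary times the factor $1000$ in the tightness of the additive adversary ($Q_{1/4}(f)\le 1000\,\blah(F)$); your sketch leaves this conversion implicit.
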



\subsection{Proof technique}
While the statement of our main theorems concern functions, a key to our proofs, especially 
for the XOR lemma, is to 
consider more general state generation problems, introduced in~\cite{AMRR11}.  
Instead of producing a classical value $f(x)$ on input $x$, the goal in state generation is to produce a 
specified target state $\ket{\sigma_x}$, again by making queries to the input $x$.
We will refer to $\sigma(x,y)=\braket{\sigma_x}{\sigma_y}$ as the target Gram matrix.
Evaluating a function $f$ can be viewed as a special case of state generation where the 
target Gram matrix is $F(x,y)=\delta_{f(x),f(y)}$.  

Our most general result (\thmref{thm:product_sigma}) shows that for a restricted class of target 
Gram matrices $\sigma$, to generate $\sigma^{\otimes k}$ with better than exponentially 
small success probability requires at least a constant fraction of 
$k$ times the complexity of $\sigma$. The strong direct product theorem is obtained as a special case of this theorem by considering the Gram matrix $F(x,y)=\delta_{f(x),f(y)}$.
To obtain the XOR lemma, we apply this theorem with 
the state generation problem of computing $f$ in the phase, that is to generate 
$\sigma_f(x,y)=(-1)^{f(x)+f(y)}$.  The advantage of considering this state is that 
$\sigma_f^{\otimes k}$ is the state generation problem corresponding to computing 
the parity of $k$ copies of $f$ in the phase.  We then show that the complexities of $f$ and 
the state generation problem of computing $f$ in the phase are closely related.  

Another key element of our proofs is a new characterization of the set of valid output 
Gram matrices for an algorithm solving a state generation problem with success probability
 $1-\epsilon$ (\claimref{claim:fidelity}).  We call a condition which defines a set containing this set 
 of valid output matrices an output condition.  Usually a lower bound uses an 
 output condition which is a relaxation of the true output condition, and shows a lower bound 
 against all Gram matrices satisfying this output condition, and thereby all valid output 
 matrices as well.  Examples of output conditions 
 previously used with the adversary bound include being close to the target Gram matrix in 
 distance measured by the $l_\infty$ or $\gamma_2$ norms.  These conditions, however, 
 do not work for small success probabilities, which is critical to obtain the strong direct product 
 theorem.
 
We give a new characterization of the true output condition in terms of fidelity.  Since the fidelity between two quantum states is bounded by the fidelity between the probability distributions arising from any measurement on those states, a relaxation of this output condition may be obtained by considering the measurement corresponding to an optimal witness for the adversary bound of the problem.  A lower bound on the multiplicative bound under this relaxed output 
condition can be written as a linear program.  By taking the dual of this linear program we are 
able to lower bound the value on $\sigma^{\otimes k}$ in terms of the bound for $\sigma$ 
by using a completely classical claim about product probability distributions (\corref{cor:expectation-fidelity}). 
This approach allows us to obtain a cleaner statement for the strong direct product theorem than what we would obtain from the output condition used in~\cite{Spa08,AMRR11}, and also clarifies the inner workings of the adversary method, which might be of independent interest.

\section{Preliminaries}\label{sec:preliminaries}
Let $\Re(z)$ denote the real part of a complex number $z$.  
Let $\delta_{a,b}$ denote the Kronecker delta function.  We will refer throughout
to a function $f: \D \rightarrow E$ where $\D \subseteq D^n$ for finite sets $D,E$.  We let 
$f^{(k)}: \D^k \rightarrow E^k$ be the function computing $k$ independent copies of $f$, 
namely $f^{(k)}(x^1, \ldots, x^k)=(f(x^1), \ldots, f(x^k))$.  We let $\oplus \circ f^{(k)}$ denote 
the parity function composed with $f^{(k)}$.
We also define some auxiliary matrices associated with $f$.  Let $F(x,y)=\delta_{f(x),f(y)}$, 
and $\Delta_i(x,y)=\delta_{x_i,y_i}$ for $x,y \in \D$ and $i \in [n]$. For boolean functions, i.e., when $\abs{E}=2$, we also define the matrix $\sigma_f(x,y)=(-1)^{f(x)+f(y)}$ for $x,y\in\D$. Note that $\sigma_f=2F-J$, where $J$ is the all-$1$ matrix.  We use $A \circ B$ for 
the entrywise product between two matrices $A,B$, also known as the Schur or Hadamard product.

Let $\rho,\sigma$ be two $\abs{\D}\times\abs{\D}$ positive semidefinite matrices such that $\Tr\rho=\Tr\sigma=1$ (i.e., quantum states on a $\abs{\D}$-dim Hilbert space) and $p,q$ be two probability distributions over ${\D}$. We will use the notion of fidelity, for both quantum states and classical probability distributions.  
\begin{definition}[Fidelity]
\begin{align*}
 \F(\rho,\sigma)&=\Tr\sqrt{\sqrt{\rho}\sigma\sqrt{\rho}} &
 \F(p,q)&=\sum_{x\in\D}\sqrt{p_xq_x}
\end{align*}
\end{definition}

 For $0\leq \lambda\leq 1$ and $0< \mu< 1$, we denote by $D(\lambda||\mu)$ the binary relative entropy of $\lambda$ and $\mu$, defined as follows.
\begin{definition}[Binary relative entropy]
\[
 D(\lambda||\mu)=\lambda\ln\frac{\lambda}{\mu}+(1-\lambda)\ln\frac{1-\lambda}{1-\mu}
\]
where $0\ln 0=0$.
\end{definition}

Finally, for a $\abs{\D}\times\abs{\D}$ matrix $A$ we will also use the factorization norm $\gamma_2(A)$.
\begin{definition}[Factorization norm]
\begin{align*}
 \gamma_2(A)&=\min_{\substack{m\in\N\\ \ket{u_x},\ket{v_x}\in\C^m}}
\left\{\max_{x\in\D}\max\left\{\norm{\ket{u_x}}^2,\norm{\ket{v_x}}^2\right\}:
\forall x,y\in\D,A_{x,y}=\braket{u_x}{v_y}
\right\} \\
&=\max_{\substack{\ket{u},\ket{v} \\ \norm{\ket{u}}=\norm{\ket{v}}=1}} \trnorm{A \circ \ket{u}\bra{v}}
\end{align*}

\end{definition}

We will make use of the following basic claims.
\begin{claim}\label{claim:hadamard-product-properties}
For any matrices $A,B$ where $A \circ B$ is defined,
\begin{enumerate}
 \item $\norm{A\circ B}\leq\gamma_2(A)\cdot\norm{B}$
 \item $\left\{A\succeq 0\ \mathrm{and}\ B\succeq 0 \right\} \Rightarrow\ A\circ B \succeq 0$
\end{enumerate}
\end{claim}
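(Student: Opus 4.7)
Both parts of the claim are classical facts (Schur's product theorem and the operator-norm bound for Hadamard products), so the plan is to recall the standard short proofs, tailored to the notation of the paper.

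For part (2), my plan is to use the eigendecompositions $A = \sum_i \lambda_i |\alpha_i\rangle\langle \alpha_i|$ and $B = \sum_j \mu_j |\beta_j\rangle\langle \beta_j|$ with $\lambda_i, \mu_j \ge 0$. A direct entrywise computation gives
\[
(A \circ B)_{x,y} = \sum_{i,j} \lambda_i \mu_j\, (\alpha_i)_x \overline{(\alpha_i)_y} (\beta_j)_x \overline{(\beta_j)_y},
\]
which shows that $A \circ B = \sum_{i,j} \lambda_i \mu_j |\alpha_i \odot \beta_j\rangle\langle \alpha_i \odot \beta_j|$, where $\odot$ denotes the entrywise product of vectors. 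Since all $\lambda_i \mu_j \ge 0$, this is a nonnegative combination of rank-one PSD matrices, hence PSD.

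For part (1), my plan is to use the first definition of $\gamma_2(A)$: write $A_{x,y} = \langle u_x | v_y\rangle$ with $\|u_x\|^2, \|v_y\|^2 \le \gamma_2(A)$ for all $x,y$. I then package the $u_x$ and $v_y$ into ``block-diagonal'' isometry-like maps from the main space $\C^{|\D|}$ into $\C^{|\D|} \otimes \C^m$, namely
\[
U = \sum_x |x\rangle\langle x| \otimes |u_x\rangle, \qquad V = \sum_y |y\rangle\langle y| \otimes |v_y\rangle.
\]
A direct computation shows that $U^*(B \otimes I_m)V = A \circ B$, because the $(x,y)$-entry evaluates to $\langle u_x | v_y\rangle B_{x,y} = A_{x,y} B_{x,y}$. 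By submultiplicativity of the spectral norm, $\|A \circ B\| \le \|U\| \cdot \|B \otimes I_m\| \cdot \|V\| = \|U\| \cdot \|B\| \cdot \|V\|$. Finally, $U^*U$ is diagonal with entries $\|u_x\|^2$ (and similarly for $V^*V$), so $\|U\|^2, \|V\|^2 \le \gamma_2(A)$, which yields the desired bound.

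There is no real obstacle here; the only thing to be careful about is matching the block structure so that $U^*(B \otimes I)V$ really does produce the Hadamard product rather than some other contraction. Once that identification is written down, both inequalities follow from one-line arguments.
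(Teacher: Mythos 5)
Your proof of both parts is correct. The paper states this claim without proof, treating both items as standard background facts (part~2 is Schur's product theorem; part~1 is the standard bound on Hadamard multipliers in terms of the factorization norm $\gamma_2$), so there is no ``paper proof'' to compare against. Your argument for part~2 via the spectral decompositions of $A$ and $B$ and the observation $A\circ B=\sum_{i,j}\lambda_i\mu_j\,\proj{\alpha_i\odot\beta_j}$ is the standard one, and your argument for part~1 correctly takes the $\gamma_2$-factorization $A_{x,y}=\braket{u_x}{v_y}$, builds the block maps $U,V$ with $U\ket{x}=\ket{x}\otimes\ket{u_x}$ (so that $U^*U$ is diagonal with entries $\norm{u_x}^2$ and hence $\norm{U}^2\le\gamma_2(A)$), and verifies $U^*(B\otimes I)V = A\circ B$, from which submultiplicativity gives the bound. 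The only nitpick is notational: $U=\sum_x \ketbra{x}{x}\otimes\ket{u_x}$ is a slight abuse and is more cleanly written as $U=\sum_x(\ket{x}\otimes\ket{u_x})\bra{x}$, but the intent and the computation are unambiguous and correct.
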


%

\subsection{Quantum query complexity and state generation}
The quantum query complexity of $f$, denoted $Q_\epsilon(f)$ is the minimum number of 
input queries needed to compute $f$ with error at most $\epsilon$.  We refer to the 
survey \cite{BuhrmanDeWolf02querysurvey} for definitions and background on this model.

Although our main interest will be in the query complexity of functions, it will be useful 
to also talk about state generation problems, introduced in \cite{AMRR11}.  Instead of 
producing a classical value $f(x)$ on input $x$, the goal in state generation is to produce a 
specified target state $\ket{\sigma_x}$, again by making queries to the input $x$.  As unitary transformations independent of the input can be made for free 
in the query model, a state generation problem is wholly determined by the Gram matrix 
$\sigma(x,y)=\braket{\sigma_x}{\sigma_y}$ of the target states $\{\ket{\sigma_x}\}_{x \in \D}$.  
We refer to $\sigma$ as the target Gram matrix.

State generation problems come in two variations, coherent and non-coherent.  
An algorithm $\P$ solves the coherent quantum state generation problem $\sigma$ with error at 
most $\epsilon$ if, for every $x\in\D$, it generates a state $\ket{\P(x)}\in\H\otimes\H'$ such that 
$\Re(\bra{\P(x)} (\ket{\sigma_x}\otimes\ket{\bar{0}})) \geq \sqrt{1-\epsilon}$, where $\H'$ denotes 
the workspace of the algorithm, and $\ket{\bar{0}}$ is a default state for $\H'$. The 
coherent quantum query complexity of $\sigma$, denoted $Q_\epsilon^c(\sigma)$ is the minimum 
number of queries needed to generate $\sigma$ coherently with error at most $\epsilon$.  

An algorithm $\P$ solves the non-coherent state generation problem $\sigma$ with error at most 
$\epsilon$ if
 there exists a set of states $\ket{\phi_x}\in\H'$ such that 
$\Re(\bra{\P(x)} (\ket{\sigma_x}\otimes\ket{\phi_x}))\geq \sqrt{1-\epsilon}$ for all $x \in \D$.  
We denote by $Q_\epsilon(\sigma)$ the non-coherent query complexity of generating $\sigma$ 
with error $\epsilon$.  

Evaluating a function $f$ can be seen as a special case of non-coherent state generation where 
the target Gram matrix is $F(x,y)=\delta_{f(x),f(y)}$.  In other words, 
$Q_\epsilon(f)=Q_\epsilon(F)$ where $F(x,y)=\delta_{f(x),f(y)}$, justifying our 
abuse of notation.  For state generation problems corresponding to functions the coherent 
and non-coherent complexities are closely related.
\begin{claim}\label{claim:coherently}
Let $f$ be a function. Then
\[
Q_\epsilon(F)\leq Q_\epsilon^c(F)\leq 2Q_{1-\sqrt{1-\epsilon}}(F) \enspace .
\]
\end{claim}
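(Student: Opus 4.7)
The plan is to prove the two inequalities separately. The left-hand one, $Q_\epsilon(F)\le Q_\epsilon^c(F)$, is immediate from the definitions: any coherent algorithm witnesses the non-coherent output condition with the same error parameter by setting $\ket{\phi_x}:=\ket{\bar 0}$ uniformly in $x$. All the content lives in the right-hand inequality, which says that non-coherent computation of a function can be made coherent at the cost of (at most) a factor of two in queries.

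For that direction I would use the standard compute--copy--uncompute trick from quantum computing, which applies here because when the target Gram matrix is $F$ the target states $\ket{f(x)}$ are orthonormal computational basis states and hence can be copied with CNOTs. Concretely, let $\P'$ be a non-coherent algorithm for $F$ with error $\eps':=1-\sqrt{1-\epsilon}$ using $T=Q_{\eps'}(F)$ queries, so that there exist $\ket{\phi_x}\in\H'$ with
\[
\Re\bigl(\bra{\P'(x)}(\ket{f(x)}_\H\otimes\ket{\phi_x}_{\H'})\bigr)\ge\sqrt{1-\eps'}=(1-\epsilon)^{1/4}.
\]
The coherent algorithm is: (i) run $\P'$ on $\H\otimes\H'$; (ii) adjoin a fresh output register $\H_{\mathrm{out}}=\ket{\bar 0}$ and CNOT the answer register of $\H$ into $\H_{\mathrm{out}}$ (no queries); (iii) apply $\P'^\dagger$ to $\H\otimes\H'$. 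The total query cost is $2T$, and the whole workspace $\H\otimes\H'$ is the ``ancilla'' with respect to which coherence is demanded.

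What remains is a short amplitude calculation to verify the coherent output condition with error $\epsilon$. Expanding $\ket{\P'(x)}=\sum_{a,c}\alpha_{a,c}\ket{a}_\H\ket{c}_{\H'}$ in the answer basis and chasing the three steps, the overlap of the final state with $\ket{\bar 0}_{\H\H'}\otimes\ket{f(x)}_{\mathrm{out}}$ works out to $\sum_c\lvert\alpha_{f(x),c}\rvert^2$, i.e., the total probability mass that $\P'$ places on the correct answer. Cauchy--Schwarz against the unit vector $\ket{\phi_x}$ gives
\[
\sum_c\lvert\alpha_{f(x),c}\rvert^2\;\ge\;\Bigl\lvert\sum_c\alpha_{f(x),c}^*(\phi_x)_c\Bigr\rvert^2\;\ge\;\bigl((1-\epsilon)^{1/4}\bigr)^2=\sqrt{1-\epsilon},
\]
which is precisely the coherent condition with error $\epsilon$. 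The only ``obstacle'' is really a bookkeeping one: the seemingly odd error parameter $1-\sqrt{1-\epsilon}$ on the right-hand side of the claim is chosen exactly so that the quadratic loss from this Cauchy--Schwarz step lands at amplitude $\sqrt{1-\epsilon}$, making the reduction tight up to that unavoidable squaring.
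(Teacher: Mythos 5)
Your proposal is correct and uses essentially the same compute--copy--uncompute construction as the paper, with the same cost doubling and the same amplitude calculation showing the final overlap equals $\sum_c\lvert\alpha_{f(x),c}\rvert^2$. The only cosmetic difference is that the paper invokes the success probability of the non-coherent algorithm directly (via the decomposition $\sum_j\alpha_j\ket{j+f(x)}\ket{\psi_j}$ with $\lvert\alpha_0\rvert^2\ge 1-\eta$), whereas you recover the same bound from the non-coherent state-generation output condition by a Cauchy--Schwarz step; these are equivalent, as the optimal $\ket{\phi_x}$ makes your inequality tight.
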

 
\begin{proof}
The lower bound holds for a general target Gram matrix $\sigma$, as the success condition 
in the coherent case implies the non-coherent one.  

For the upper bound, let $A_x$ be an algorithm computing $f(x)$ with success probability 
$1-\eta$.
Thus the algorithm applied on $\ket{0}\ket{\bar{0}}$, where the first register is the output register and the second register corresponds to some workspace initialized in a default state, prepares a state
\[
 A_x \ket{0}\ket{\bar{0}}=\sum_{j} \alpha_j \ket{j+f(x)}\ket{\psi_j},
\]
where by assumption 
$|\alpha_{0}| \ge \sqrt{1-\eta}$, and the states $\ket{\psi_j}$ describe the final state of the workspace register.
Let us now copy the output register into an additional register initialized in the state $\ket{0}$ using an addition gate $G$, and finally uncompute the original output register together with the workspace by using the algorithm $A_x$ in reverse.

We analyze the overlap of $A_x^{-1} G A_x \ket{0}\ket{\bar{0}}\ket{0}$ with $\ket{0}\ket{\bar{0}}\ket{f(x)}$.  After applying $G$ on $A_x \ket{0}\ket{\bar{0}}\ket{0}$, we have the state 
$\ket{v}=\sum_{j} \alpha_j \ket{j+f(x)}\ket{\psi_j}\ket{j+f(x)}$.  Now we look at the overlap of
$\ket{0}\ket{\bar{0}}\ket{f(x)}$ with $A_x^{-1}\ket{v}$ or, equivalently, the overlap of $A_x\ket{0}\ket{\bar{0}}\ket{f(x)}$ with $\ket{v}$. Since
\[
 A_x\ket{0}\ket{\bar{0}}\ket{f(x)}=\sum_{j} \alpha_j \ket{j+f(x)}\ket{\psi_j}\ket{f(x)},
\]
we have
\[
 \bra{0}\bra{\bar{0}}\bra{f(x)}A_x^{-1}\ket{v}
=\sum_{j}  \abs{\alpha_j}^2\braket{f(x)}{j+f(x)}
\geq 1-\eta.
\]
Therefore, this algorithm coherently computes $f(x)$ with success probability $1-\epsilon\geq (1-\eta)^2$. Inverting this relation, we obtain $\eta\geq 1-\sqrt{1-\epsilon}$.
\end{proof}

We will also consider another type of state generation problem associated with a function, 
that of computing the function in the phase.  For a boolean function $f:\D \rightarrow \B$ let 
$\sigma_f(x,y)=(-1)^{f(x)+f(y)}$.  While the non-coherent complexity of $\sigma_f$ is trivial, 
the coherent complexity of $\sigma_f$ is closely related to that of $F$.

\begin{claim}\label{claim:computing-phase}
 \begin{align*}
  Q_{({1-\sqrt{1-\epsilon}})/{2}+{\epsilon}/{4}}^c(F)\leq Q_\epsilon^c(\sigma_f)\leq 2Q_{({1-\sqrt{1-\epsilon})}/{2}}(F)
 \end{align*}
\end{claim}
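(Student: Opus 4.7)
The plan is to prove both inequalities by standard phase-kickback constructions and to track carefully how the real-overlap success condition propagates through a short circuit. Throughout, since $f$ is boolean, the output register for $F$ may be taken two-dimensional, and the rank-one target for $\sigma_f$ may be chosen as $\ket{\sigma_x}=(-1)^{f(x)}\ket{0}$.

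For the right-hand inequality, I would start with a non-coherent algorithm $A_x$ computing $f$ with error $\eta=(1-\sqrt{1-\epsilon})/2$. The non-coherent success condition yields a decomposition $A_x\ket{0}\ket{\bar{0}}=\ket{0}\ket{\psi_0}+\ket{1}\ket{\psi_1}$ with $\norm{\ket{\psi_{f(x)}}}^2\geq 1-\eta$ (since $\Re(\braket{\psi_{f(x)}}{\phi_x})\leq \norm{\ket{\psi_{f(x)}}}$ for the unit state $\ket{\phi_x}$ witnessing the condition). I would then apply the standard phase-kickback circuit $A_x^{-1}(Z\otimes I)A_x$ to $\ket{0}\ket{\bar{0}}$, which uses $2Q_\eta(F)$ queries, where $Z$ acts on the boolean output register. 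A direct computation gives $\bra{0}\bra{\bar{0}}\,A_x^{-1}(Z\otimes I)A_x\,\ket{0}\ket{\bar{0}}=\norm{\ket{\psi_0}}^2-\norm{\ket{\psi_1}}^2$, so the real overlap of the resulting state with $(-1)^{f(x)}\ket{0}\ket{\bar{0}}$ equals $(-1)^{f(x)}(\norm{\ket{\psi_0}}^2-\norm{\ket{\psi_1}}^2)\geq 1-2\eta=\sqrt{1-\epsilon}$, which is precisely the coherent success condition for $\sigma_f$ at error $\epsilon$.

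For the left-hand inequality, let $B_x$ be a coherent algorithm for $\sigma_f$ with error $\epsilon$, so that $\gamma:=\braket{(-1)^{f(x)}0,\bar{0}}{B(x)}$ satisfies $\Re(\gamma)\geq\sqrt{1-\epsilon}$. I would prepare a fresh control qubit in $\ket{+}$ together with $B_x$'s workspace in $\ket{\bar{0}}$, apply $B_x$ controlled on the control qubit, and finish with a Hadamard on the control. Since in the standard query model controlled oracle calls cost the same as ordinary ones, the total query count is $Q^c_\epsilon(\sigma_f)$. The resulting state is $\tfrac{1}{2}\ket{0}(\ket{\bar{0}}+\ket{B(x)})+\tfrac{1}{2}\ket{1}(\ket{\bar{0}}-\ket{B(x)})$, and a short calculation gives its real overlap with $\ket{f(x)}\ket{0}\ket{\bar{0}}$ equal to $\tfrac{1}{2}(1+\Re(\gamma))\geq(1+\sqrt{1-\epsilon})/2$. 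The algebraic identity $\bigl((1+\sqrt{1-\epsilon})/2\bigr)^{2}=1-\bigl((1-\sqrt{1-\epsilon})/2+\epsilon/4\bigr)$ then identifies the coherent error as $(1-\sqrt{1-\epsilon})/2+\epsilon/4$, exactly as claimed. The only real obstacle in either direction is bookkeeping signs and phases so that the $(-1)^{f(x)}$ factor from the $\sigma_f$ target cancels cleanly against the phase introduced by $Z$ (respectively, by controlled-$B_x$), leaving an error controlled purely by $\epsilon$.
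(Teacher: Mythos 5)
Your proof is correct and follows essentially the same route as the paper's: for the right-hand inequality both use the compute--$Z$--uncompute circuit $A_x^{-1}(Z\otimes I)A_x$ and the identity $1-2\eta=\sqrt{1-\epsilon}$; for the left-hand inequality both use the Hadamard-test circuit, realizing the Gram-matrix relation $F=(J+\sigma_f)/2$. The only cosmetic difference is that where you compute the real overlap $\tfrac{1}{2}(1+\Re\gamma)\ge\tfrac{1+\sqrt{1-\epsilon}}{2}$ explicitly and then invert, the paper reaches the same bound by one line of joint concavity of fidelity applied to the Gram matrices $\tfrac{J+\rho}{2}$ and $\tfrac{J+\sigma_f}{2}$; the two are equivalent.
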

\begin{proof}
 For the lower bound, we turn an algorithm for $\sigma_f$ into an algorithm for $F=(J+\sigma_f)/2$ by using the SWAP test. The error dependence then follows from the joint concavity of the fidelity:
\begin{align*}
 \F\left(\tfrac{J+\rho}{2}\circ uu^*,\tfrac{J+\sigma_f}{2}\circ uu^*\right)\geq \frac{1}{2}+\frac{1}{2}\F\left(\rho\circ uu^*,\sigma_f\circ uu^*\right).
\end{align*}
for any $u$.

For the upper bound, let us consider an algorithm $A_x$ computing $f(x)$ (in a register) with success probability $1-\eta$.
Thus, the algorithm applied on $\ket{0}\ket{\bar{0}}$, where the first register is the output register and the second register corresponds to some workspace initialized in a default state, prepares a state
\[
 A_x \ket{0}\ket{\bar{0}}=\sum_{j=0,1} \alpha_j \ket{j\oplus f(x)}\ket{\psi_j},
\]
where by assumption 
$|\alpha_{0}| \ge \sqrt{1-\eta}$, and the states $\ket{\psi_j}$ describe the final state of the workspace register.
Let $\Phi$ be a phase gate acting on the output register as $\ket{b}\mapsto(-1)^{f(x)}\ket{b}$. We can turn an algorithm $A_x$ computing in a register into an algorithm computing in the phase by first applying $A_x$ to compute the output, then applying the phase gate $\Phi$, and finally applying $A_x^{-1}$ to uncompute the output.

After applying $\Phi$ on $A_x \ket{0}\ket{\bar{0}}$, we have the state 
$\Phi A_x\ket{0}\ket{\bar{0}}=\sum_{j=0,1} (-1)^{j+f(x)} \alpha_j \ket{j\oplus f(x)}\ket{\psi_j}$.  Now we look at the overlap of
$(-1)^{f(x)}\ket{0}\ket{\bar{0}}$ with $A_x^{-1}\Phi A_x \ket{0}\ket{\bar{0}}$ or, equivalently, the overlap of $(-1)^{f(x)}A_x\ket{0}\ket{\bar{0}}$ with $\Phi A_x \ket{0}\ket{\bar{0}}$. We have
\[
 (-1)^{f(x)}\bra{0}\bra{\bar{0}}A_x^{-1}\Phi A_x \ket{0}\ket{\bar{0}}
=\sum_{j} (-1)^j \abs{\alpha_j}^2
\geq 1-2\eta.
\]
Therefore we obtain a success probability $1-\epsilon\geq (1-2\eta)^2$. Inverting this relation, we obtain $\eta\geq(1-\sqrt{1-\epsilon})/2$.
\end{proof}

\section{Adversary methods}
In this section we introduce both the additive and multiplicative adversary lower 
bound methods.  Even when one is only interested in the functional case, it is useful 
to view these methods as lower bounds on quantum state generation as this allows 
the separation of the method into two distinct parts.  The first part is a lower bound on 
exact coherent quantum state generation.  This is where the two methods differ.  The second 
part is the output condition, a minimization of the bound for exact coherent quantum state 
generation over all valid output Gram matrices.  The set of valid output Gram matrices is 
determined by the target Gram matrix $\sigma$, the error parameter $\epsilon$, and if one is 
considering coherent or non-coherent state generation.  This second step is common to both 
the additive and multiplicative methods.  Finally, we show that the multiplicative bound is at least 
as large as the additive bound.  

\subsection{Additive method}
We first review the derivation of the additive adversary method to compare it with the 
multiplicative method in the next section.  We will actually present a generalization of the 
additive adversary method due to \cite{LMRSS11}.  

Consider an algorithm that exactly coherently 
computes $\sigma$ by making $T$ queries.  Let $\ket{\psi_x^t}$ be the state of this algorithm 
on input $x$ after $t$ queries, and $\rho^t(x,y)=\braket{\psi_x^t}{\psi_y^t}$ be the corresponding 
Gram matrix.  Note that $\rho^0=J$ the all ones matrix and, by assumption, $\rho^T=\sigma$.  

Now let $\Gamma$ be a matrix, $\evector$ a vector, and consider the potential function 
$\Phi(t)=\Tr((\Gamma \circ \rho^t) \evector\evector^*)$.  The additive change in this 
potential function from the beginning to the end of the protocol is 
\begin{align*}
\Tr((\Gamma \circ (J-\sigma)) \evector\evector^*)&= \sum_{t=0}^{T-1} 
\Tr((\Gamma \circ (\rho^t - \rho^{t+1})) \evector\evector^*) \\
&\le T \max_t \Tr((\Gamma \circ (\rho^t - \rho^{t+1})) \evector\evector^*) \enspace .
\end{align*}
A standard argument (see, for example, \cite{HLS07}) then goes that if we impose the 
condition on $\Gamma$ that
\[
I \pm \Gamma \circ (J-\Delta_i) \succeq 0 \ \mathrm{ for\  all } \ i \in [n],\\ 
\]
then $\Tr((\Gamma \circ (\rho^t - \rho^{t+1})) \evector\evector^*) \le 2$, for all $t$ and $\evector$.  

As this argument holds for any $\Gamma$ and $\evector$, we can maximize over 
them leading to the following definition.  
\begin{definition}[Additive adversary method~\cite{LMRSS11}]\label{def:additive}
\begin{equation*}
\begin{aligned}
\blah(\sigma) =&\underset{\Gamma}{\mathrm{maximize}}
& & \norm{\Gamma \circ (J-\sigma)} \\
& \mathrm{subject\  to}
& & I \pm \Gamma \circ (J-\Delta_i) \succeq 0 \ \mathrm{ for\  all } \ i \in [n],\\ 
\end{aligned}
\end{equation*}
where the maximization is over $\abs{\D}\times\abs{\D}$ hermitian matrices $\Gamma$.
\end{definition}
The preceding argument shows the following.
\begin{theorem}[\cite{LMRSS11}]
\label{thm:additive-tight}
For any target Gram matrix $\sigma$,
\[
Q_0^c(\sigma) \ge \frac{\blah(\sigma)}{2} 
\]
\end{theorem}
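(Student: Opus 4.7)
My plan is to formalize the potential-function argument sketched in the paragraph immediately preceding the theorem. Fix any feasible pair $(\Gamma, v)$, where $\Gamma$ satisfies the constraints in \defref{def:additive} and $v$ is a unit vector. Given a $T$-query algorithm that coherently generates $\sigma$ with zero error, let $\ket{\psi_x^t}$ be its state on input $x$ after $t$ queries and $\rho^t(x,y) = \braket{\psi_x^t}{\psi_y^t}$. Two endpoint identities are immediate: $\rho^0 = J$ because the initial state is input-independent, and $\rho^T = \sigma$ because the zero-error coherent success condition forces $\ket{\psi_x^T} = \ket{\sigma_x}\otimes\ket{\bar 0}$. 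Define the potential $\Phi(t) = v^*(\Gamma \circ \rho^t) v$, so that by telescoping,
\begin{equation*}
v^*(\Gamma \circ (J - \sigma))\,v \;=\; \Phi(0) - \Phi(T) \;=\; \sum_{t=0}^{T-1}\bigl(\Phi(t) - \Phi(t+1)\bigr).
\end{equation*}

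The heart of the argument is the one-step bound $|\Phi(t) - \Phi(t+1)| \le 2$. To prove it, I would decompose the query oracle $O_x : \ket{i}\ket{b}\ket{z} \mapsto \ket{i}\ket{b \oplus x_i}\ket{z}$ using the projectors $\Pi_i$ onto index $i$, and show that
\begin{equation*}
(\rho^t - \rho^{t+1})(x,y) \;=\; \sum_{i=1}^n (1 - \delta_{x_i, y_i})\, M_i(x,y),
\end{equation*}
where each $M_i$ is a PSD Gram matrix arising from the components of $\ket{\psi_x^t}$ on which the query actually acts nontrivially. In matrix form this reads $\Gamma \circ (\rho^t - \rho^{t+1}) = \sum_i \bigl(\Gamma \circ (J - \Delta_i)\bigr) \circ M_i$. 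The PSD constraint $I \pm \Gamma \circ (J - \Delta_i) \succeq 0$ combined with the PSD-preservation of Hadamard products (\claimref{claim:hadamard-product-properties}) gives $\pm\bigl(\Gamma \circ (J - \Delta_i)\bigr) \circ M_i \preceq I \circ M_i$, so each summand contributes at most $v^*\mathrm{diag}(M_i)v$ in absolute value. Summing over $i$ and using $\sum_i M_i(x,x) \le 2$, a consequence of the unit norm of $\ket{\psi_x^t}$ together with the decomposition of the query-induced perturbation, bounds the total change by $2\|v\|^2 = 2$. Getting these constants clean is the main bookkeeping obstacle, but this is the standard computation from \cite{HLS07}.

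Combining the telescoping identity with the one-step bound gives $|v^*(\Gamma \circ (J - \sigma))v| \le 2T$ for every unit $v$ and every feasible $\Gamma$. Since $\Gamma \circ (J - \sigma)$ is Hermitian (both factors being Hermitian) and since $-\Gamma$ is feasible whenever $\Gamma$ is (the $\pm$ in the constraints being symmetric), maximizing over unit $v$ yields $\norm{\Gamma \circ (J - \sigma)} \le 2T$, and then maximizing over feasible $\Gamma$ gives $Q_0^c(\sigma) = T \ge \blah(\sigma)/2$, as claimed.
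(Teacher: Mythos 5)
Your proof is correct and follows essentially the same route the paper takes: define the potential $\Phi(t) = v^*(\Gamma\circ\rho^t)v$, telescope from $\rho^0=J$ to $\rho^T=\sigma$, and invoke the standard per-query bound of $2$ from \cite{HLS07} --- which the paper also simply cites rather than reproving. One small imprecision in your sketch: for the standard query oracle, $M_i$ is not itself PSD (it is a difference $\rho_i^t - \rho_i^{t+1}$ of two Gram matrices, each of whose diagonals sums to $1$, which is precisely where the constant $2$ comes from), but you flag the bookkeeping as deferred to \cite{HLS07}, so this does not constitute a gap.
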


\cite{LMRSS11} have also shown that this lower bound is tight for the bounded-error query complexity of functions.
\begin{theorem}[\cite{LMRSS11}]
 For any function $f$,
\[
Q_{1/4}(f) \le 1000\cdot \blah(F) 
\]
\end{theorem}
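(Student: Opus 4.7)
The plan is to exhibit an explicit quantum algorithm that computes $f$ with error at most $1/4$ using $O(\mathrm{Adv}^*(F))$ queries, via SDP duality applied to the program defining $\blah(\sigma)$ in Definition~\ref{def:additive}. The first step is to write down the Lagrangian dual: the maximization over $\Gamma$ subject to $I \pm \Gamma \circ (J - \Delta_i) \succeq 0$ dualizes, using a standard SDP calculation, into a minimization over collections of vectors $\{\ket{u_{x,i}}, \ket{v_{x,i}}\}_{x \in \D, i \in [n]}$ satisfying a bilinear witness condition that forces $\sum_i \ket{u_{x,i}} \otimes \ket{v_{y,i}}$ to reconstruct an indicator of disagreement whenever $f(x) \ne f(y)$, while the maximum squared norms are bounded by $\blah(F)$. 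By strong duality this optimum equals $\blah(F)$.

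Next I would convert these dual vectors into a span-program-like data structure, following the framework of Reichardt and coauthors. Concretely, the dual certificate produces input-independent ``target'' vectors $\ket{t_j}$ together with, for each $(x,i)$, a one-dimensional subspace $H_{x,i}$ depending only on $x_i$. Let $\Pi_A$ be the projector onto the span of certain canonical witness vectors and $\Pi_B(x)$ the projector onto $\bigoplus_i H_{x,i}$; since $\Pi_B(x)$ is determined coordinatewise, reflecting through it costs $2$ queries. Define the walk operator $U_x = (2 \Pi_A - I)(2 \Pi_B(x) - I)$. The dual conditions translate to the following spectral dichotomy: there is a distinguished input-independent state $\ket{w_0}$ such that, depending on the value of $f(x)$, either $\ket{w_0}$ lies (essentially) in the $+1$ eigenspace of $U_x$, or its spectral support on $U_x$ is bounded away from $+1$ by phase gap $\Omega(1/\blah(F))$.

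The algorithm then runs phase estimation on $U_x$ applied to $\ket{w_0}$ with precision $\Theta(1/\blah(F))$; this uses $O(\blah(F))$ applications of $U_x$, hence $O(\blah(F))$ queries, and by the standard phase-estimation analysis outputs a near-zero phase with probability bounded away from $1/2$ in one case and bounded toward $1$ in the other. A constant number of repetitions together with a majority vote reduces the error to $1/4$, and the constant $1000$ is what comes out of tracking the phase-estimation overhead, the cost of the two reflections per step, and the amplification. Both \claimref{claim:coherently} and the relation between functional output and coherent state generation ensure this non-coherent algorithm suffices for the stated bound on $Q_{1/4}(f)$.

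The main obstacle is the quantitative spectral analysis: turning the algebraic feasibility of the dual certificate into an effective spectral gap of the concrete walk operator $U_x$, and then arguing that phase estimation with the resulting precision distinguishes the two cases with constant bias uniformly over all inputs $x$. This is precisely the content of the Reichardt-type analysis underlying \cite{LMRSS11}, and the delicate point is that the gap must scale as $1/\blah(F)$ rather than as a worse function of the instance; all the remaining losses are absorbed into the explicit constant $1000$.
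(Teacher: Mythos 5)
The paper does not actually prove this theorem: it is stated as an imported result and credited to \cite{LMRSS11}, with no proof supplied in this manuscript. So there is no ``paper's own proof'' to compare against; the relevant comparison is against the argument in \cite{LMRSS11} (and the earlier Reichardt line of work it builds on).

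Your sketch correctly identifies the overall strategy used there: dualize the SDP defining $\blah(F)$, read off a span-program-like dual certificate, build a product of reflections whose spectral gap is controlled by the adversary value, and run phase estimation with precision $\Theta(1/\blah(F))$, followed by constant amplification. That is indeed the shape of the argument. But as written this is an outline, not a proof. Two things in particular are asserted rather than established. First, the dualization: the form of the dual you describe (vectors $\ket{u_{x,i}},\ket{v_{x,i}}$ with the ``filtered $\gamma_2$''-type reconstruction condition) is the dual of the original $\ADV^\pm$ program with the extra constraint $\Gamma\circ F=0$; for the generalized bound $\blah$ of \defref{def:additive}, the objective is $\norm{\Gamma\circ(J-\sigma)}$ without that constraint, so the dual takes a slightly different form, and one has to be careful that the resulting object still yields a valid algorithm (this is exactly what the factor-of-two relation in \eqnref{factoroftwo} is compensating for). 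Second, and more seriously, you explicitly defer the ``effective spectral gap'' step --- the lemma converting an approximate dual witness into a phase gap of the walk operator uniformly over inputs --- to \cite{LMRSS11}. That step is the entire mathematical content of the theorem; the rest is bookkeeping. Citing it is exactly what the paper under review does, so your proposal ends up being a restatement of the citation with added scaffolding rather than an independent derivation. If your goal is to reproduce the paper's argument, the honest proof is the one-word proof ``see \cite{LMRSS11}''; if your goal is to prove the theorem from scratch, you would need to actually state and prove an effective spectral gap lemma for the concrete reflections you define, and track the constants through the dualization to justify the value $1000$.
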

Up to the constant factor, this upper bound holds more generally for \emph{well-behaved} state generation problems, where the query complexity $Q_\epsilon(\sigma)$ does not depend dramatically on the error $\epsilon$ (i.e., $Q_\epsilon(\sigma)=\Theta(Q_{\epsilon'}(\sigma))$ for small $\epsilon,\epsilon'$).

\begin{remark}
The adversary bound $\ADV^\pm$ from \cite{HLS07} was originally defined in the functional 
case, that is, for target Gram matrices $F$ of the form $F(x,y)=\delta_{f(x),f(y)}$ for a 
function $F$.  This definition had an additional constraint that $\Gamma \circ F=0$.  
This constraint only affects the bound up to a multiplicative factor of two 
\cite{LMRSS11}.  
\begin{equation}
\label{factoroftwo}
\ADV^\pm(F) \le \blah(F) \le 2 \ADV^\pm(F) \enspace .
\end{equation}
The constraint $\Gamma \circ F=0$ allows one to show that $\ADV^\pm(F)/2$ is a lower bound 
even on the non-coherent complexity of generating $F$.  One can see that $\blah(F)/4$ is a lower 
bound on the non-coherent complexity of generating $F$ either by Eq.~\eqref{factoroftwo} or by
\claimref{claim:coherently} showing that the coherent and non-coherent state generation 
complexities of functions are related by a factor of two.
\end{remark}

\subsection{Multiplicative adversary method}
The multiplicative bound is derived by considering the same potential function 
$\Phi(t)$, but looks at the ratio of this function 
at the beginning and end of the protocol, rather than the difference.  Equivalently, one can 
consider the logarithmic potential function $\ln(\Phi(t))$ and again look at the additive 
change over the course of the protocol.  
As the argument to the logarithm should be positive,
we already see that a new constraint on $\Gamma$ is needed, namely $\Gamma \succ 0$.  

\begin{definition}[Multiplicative adversary method]
\label{def:mult_exact}
\begin{equation*}
\begin{aligned}
\MADV(\sigma)=  \underset{c}{\mathrm{maximize}} 
\ \frac{1}{\ln(c)}\  &  \underset{\Gamma \succ 0,\evector}{\mathrm{maximize}}
& & \ln\left(\Tr((\Gamma \circ \sigma)\evector\evector^*) \right) \\
& \mathrm{subject\  to}
& & \Tr(\Gamma\evector\evector^*)=1 \\
& & & c^{-1}\Gamma\preceq \Gamma \circ \Delta_i\preceq c\ \Gamma \ \mathrm{ for\  all } \ i \in [n],\\ 
\end{aligned}
\end{equation*}
where the maximization is over $\abs{\D}\times\abs{\D}$ positive definite matrices $\Gamma$ and unit vectors $\evector$.
\end{definition}

\begin{theorem}[\cite{Spa08,AMRR11}]
 For any state generation problem $\sigma$,
\begin{align*}
Q^c_0(\sigma) &\ge \frac{\MADV(\sigma)}{2} \enspace .
\end{align*}
\end{theorem}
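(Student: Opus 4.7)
The plan is to adapt the additive potential-function argument by tracking multiplicative rather than additive changes. Fix any exact coherent algorithm for $\sigma$ making $T$ queries, let $\ket{\psi^t_x}$ denote its state on input $x$ after $t$ queries, and set $\rho^t(x,y)=\braket{\psi^t_x}{\psi^t_y}$. Because the initial state is input-independent, $\rho^0=J$, and exact coherent generation yields $\rho^T=\sigma$. For any feasible triple $(\Gamma,\evector,c)$ in \defref{def:mult_exact}, define $\Phi(t)=\Tr\bigl((\Gamma\circ\rho^t)\evector\evector^*\bigr)$. Then $\Phi(0)=\Tr(\Gamma\evector\evector^*)=1$ by normalization, while $\Phi(T)=\Tr((\Gamma\circ\sigma)\evector\evector^*)$. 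The goal is the per-query operator inequality $\Gamma\circ\rho^{t+1}\preceq c^2\,\Gamma\circ\rho^t$; iterating gives $\Phi(T)\leq c^{2T}$, hence $T\geq\ln\Phi(T)/(2\ln c)$, and optimizing over $\Gamma,\evector,c$ recovers $T\geq\MADV(\sigma)/2$.

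Input-independent unitaries preserve $\rho^t$, so only an oracle step must be analyzed. Expand $\ket{\psi^t_x}=\sum_{i,b}\ket{i}\ket{b}\ket{\phi^t_{x,i,b}}$ in the query-and-answer registers and rotate the answer register into the $\pm$ basis by setting $\ket{\phi^{t,\pm}_{x,i}}=(\ket{\phi^t_{x,i,0}}\pm\ket{\phi^t_{x,i,1}})/\sqrt{2}$. Let $M^{t,\pm}_i(x,y)=\braket{\phi^{t,\pm}_{x,i}}{\phi^{t,\pm}_{y,i}}$; each $M^{t,\pm}_i$ is a Gram matrix and hence PSD. Because $X$ acts diagonally in the $\pm$ basis with eigenvalues $+1$ and $-1$, the operator $O_x^\dagger O_y$ contributes $+1$ on the $+$ branch of block $i$ and $(-1)^{x_i\oplus y_i}=2\delta_{x_i,y_i}-1$ on the $-$ branch. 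A direct expansion of $\braket{\psi^{t+1}_x}{\psi^{t+1}_y}=\bra{\psi^t_x}O_x^\dagger O_y\ket{\psi^t_y}$ therefore gives
\begin{align*}
\rho^t &= \sum_i\bigl(M^{t,+}_i + M^{t,-}_i\bigr), & \rho^{t+1} &= \sum_i\bigl(M^{t,+}_i + M^{t,-}_i\circ(2\Delta_i-J)\bigr).
\end{align*}

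The constraint $\Gamma\circ\Delta_i\preceq c\,\Gamma$ implies $\Gamma\circ(2\Delta_i-J)\preceq(2c-1)\,\Gamma$. Combining $(2c-1)\Gamma-\Gamma\circ(2\Delta_i-J)\succeq 0$ with $M^{t,-}_i\succeq 0$ via the Schur product theorem (\claimref{claim:hadamard-product-properties}) yields $(\Gamma\circ(2\Delta_i-J))\circ M^{t,-}_i\preceq(2c-1)\,\Gamma\circ M^{t,-}_i$. Since $2c-1\geq 1$ for $c\geq 1$, we may also write $\Gamma\circ M^{t,+}_i\preceq(2c-1)\Gamma\circ M^{t,+}_i$, and summing the two bounds over $i$ produces $\Gamma\circ\rho^{t+1}\preceq(2c-1)\,\Gamma\circ\rho^t\preceq c^2\,\Gamma\circ\rho^t$, using $2c-1\leq c^2$. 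This is the required per-query bound.

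The main obstacle is finding the basis in which the oracle acts cleanly on PSD blocks; once one passes to the $\pm$ basis and recognizes that the $M^{t,\pm}_i$ are Gram matrices, the multiplicative constraint on $\Gamma$ slots in directly through the Schur product theorem. The factor of two loss appears because the symmetrized quantity $2\Delta_i-J$ picks up the constraint at its upper endpoint on the $-$ branch, yielding per-query growth $(2c-1)\leq c^2$ rather than $c$.
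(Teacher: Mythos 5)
Your proof is correct for the case where the input alphabet is $D=\{0,1\}$, and it takes a genuinely different route from the paper. The paper's argument (following \cite{AMRR11}) first reduces the algorithm to one using only ``computing'' and ``uncomputing'' queries, paying a factor of two in the number of queries; for those specialized queries the Gram matrix update is simply $\rho^t = \sum_i \rho_i^t \mapsto \rho^{t+1} = \sum_i \rho_i^t \circ \Delta_i$ (or its reverse), and the two sides of the constraint $c^{-1}\Gamma \preceq \Gamma\circ\Delta_i \preceq c\Gamma$ give a per-query ratio of $c$ for the two query types. You instead keep the standard bit-flip oracle and pass to the $\pm$ basis of the answer register, obtaining the clean PSD decomposition $\rho^{t+1}=\sum_i\bigl(M_i^{t,+}+M_i^{t,-}\circ(2\Delta_i-J)\bigr)$. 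The factor of two then appears algebraically from $2c-1\le c^2$ rather than from a reduction between oracle models, and you in fact only use the upper constraint $\Gamma\circ\Delta_i\preceq c\Gamma$. This is a nice way to see where the factor of two comes from, and the operator inequality $\Gamma\circ\rho^{t+1}\preceq c^2\,\Gamma\circ\rho^t$ you establish is stronger than the scalar bound the argument needs. One small improvement: keeping the ratio $2c-1$ instead of relaxing to $c^2$ gives $T\geq\ln\Phi(T)/\ln(2c-1)$, which is at least as strong; it just doesn't match the normalization in \defref{def:mult_exact}.

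The genuine gap is that the paper's theorem and the surrounding definitions are stated for $\D\subseteq D^n$ with an arbitrary finite alphabet $D$, while your $\pm$-basis decomposition and the identity $(-1)^{x_i\oplus y_i}=2\Delta_i(x,y)-1$ are specific to $|D|=2$. For $|D|>2$ the analogous Fourier decomposition produces rank-one PSD blocks $\Omega_{i,k}(x,y)=\omega^{k(x_i-y_i)}$ for each nonzero character $k$, and these are not functions of $\Delta_i$ alone; the relation is only that $\Delta_i = \tfrac{1}{|D|}\sum_k \Omega_{i,k}$, which bounds the average of the $\Gamma\circ\Omega_{i,k}$ by $c|D|\,\Gamma$ but not each term individually. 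So the Schur-product step does not carry over directly, and you would need a different argument (or to fall back on the computing/uncomputing reduction) to cover the general-alphabet case the theorem claims.
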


\begin{proof}
Consider an algorithm that coherently generates $\sigma$ by making $T$ queries, and define 
a potential function $\Phi(t)=\Tr((\Gamma \circ \rho^t) \evector\evector^*)$, where 
$\Gamma \succ 0$.  Then 
\begin{align*}
\frac{\Phi(T)}{\Phi(0)}&=
\frac{\Tr((\Gamma \circ \sigma)\evector\evector^*)}{\Tr((\Gamma \circ J)\evector\evector^*)}
=\prod_{t=0}^{T-1} 
\frac{\Tr((\Gamma \circ \rho^{t+1})\evector\evector^*)}{\Tr((\Gamma \circ \rho^{t})\evector\evector^*)} \\
&\le \left(\max_t \frac{\Tr((\Gamma \circ \rho^{t+1})\evector\evector^*)}{\Tr((\Gamma \circ \rho^{t})\evector\evector^*)}\right)^T.
\end{align*}

Analogously to the additive bound, we now show that the constraint 
$c^{-1}\Gamma\preceq \Gamma \circ \Delta_i\preceq c\ \Gamma \ \mathrm{ for\  all } \ i \in [n]$ 
implies 
\[
\max_t \frac{\Tr((\Gamma \circ \rho^{t+1})\evector\evector^*)}{\Tr((\Gamma \circ \rho^{t})\evector\evector^*)}
\le c \enspace .
\]
This argument is very similar to proofs in~\cite{Spa08,AMRR11} so we only sketch the idea here.
Recall from~\cite{AMRR11} that we can assume that there are only two types of queries, called computing and uncomputing queries (this restriction can only increase the query complexity by a factor at most $2$, hence the factor $1/2$ in the final lower bound).
Let us first consider a computing query.  Let $\ket{\psi_{x,i}^t}=P_i \ket{\psi_x^t}$, where $P_i$ is 
a projector onto the query register containing index $i$, and 
$\rho_i^t(x,y)=\braket{\psi_{x,i}^t}{\psi_{y,i}^t}$.  We can decompose the state before the $t$-th query as $\rho^t=\sum_i\rho_i^t$, and the state after the query as 
$\rho^{t+1}=\sum_i\rho_i^t\circ\Delta_i$. The condition 
$\Gamma \circ \Delta_i\preceq c\ \Gamma$ then immediately implies that
\begin{align*}
 \Tr((\Gamma \circ \rho^{t+1})\evector\evector^*)\leq c\ \Tr((\Gamma \circ \rho^{t})\evector\evector^*).
\end{align*}
For uncomputing queries, the roles of $\rho^t$ and $\rho^{t+1}$ are interchanged, and we obtain the same conclusion from the constraint $\Gamma\preceq c\ \Gamma \circ \Delta_i$.
\end{proof}

\begin{remark}
The constraints on $\Gamma$ given here are expressed differently from~\cite{Spa08, AMRR11}, 
the latter using the constraint  $\norm{\Gamma^{1/2} (\Gamma \circ \Delta_i)^{-1/2}}^2\leq c$ and
$\norm{(\Gamma \circ \Delta_i)^{1/2}\Gamma^{-1/2}}^2\leq c$.
It is straightforward to show, however, that these conditions are equivalent to 
$c^{-1}\Gamma\preceq \Gamma \circ \Delta_i\preceq c\ \Gamma$.
\end{remark}
%
%

When the value of $c$ is fixed, the multiplicative bound becomes a semidefinite program.
Indeed, setting $W=\Gamma\circ\evector\evector^*$, we have:
 \begin{equation*}
\begin{aligned}
\MADV(\sigma)=  \underset{c}{\mathrm{maximize}} 
\ \frac{1}{\ln(c)}\  &  \underset{W \succ 0}{\mathrm{maximize}}
& & \ln\left(\Tr(W \sigma)\right) \\
& \mathrm{subject\  to}
& & \Tr(WJ)=1 \\
& & & c^{-1}W\preceq W \circ \Delta_i\preceq c\ W \ \mathrm{ for\  all } \ i \in [n].
\end{aligned}
\end{equation*}
Thus we can view the 
multiplicative adversary bound as a maximization over semidefinite programs.

\subsection{Output condition}
Thus far, we have seen lower bounds on the problem of {\em exact coherent} state generation.  
To obtain a lower bound in the bounded-error setting---coherent or non-coherent---one can 
minimize the exact coherent bound over the set of valid final Gram matrices of a successful 
algorithm.  

We will restrict our discussion to the coherent output condition.  As our main results are for 
functions, by showing lower bounds on the coherent state generation problems 
$F$ and $\sigma_f$ associated with a function $f$, we obtain lower bounds on the 
query complexity of $f$ by \claimref{claim:coherently} and 
\claimref{claim:computing-phase}.

Recall that a successful coherent $\epsilon$-error algorithm $\P$
for the set of target vectors $\{\sigma_x\}$ must satisfy 
$\Re(\bra{\P(x)} (\ket{\sigma_x}\otimes\ket{\bar{0}})) \geq \sqrt{1-\epsilon}$.  We can 
equivalently rephrase this as 
$\Re(\bra{\P(x)} V(\ket{\sigma_x}\otimes\ket{\bar{0}})) \geq \sqrt{1-\epsilon}$ for some 
unitary $V$.  This can be done as the unitary $V$ can be appended to the algorithm at no 
extra cost, and this formulation has the advantage that it only depends on the Gram matrix 
$\sigma$ of the vectors $\{\sigma_x\}$ and the Gram matrix 
$\sigma'(x,y)=\braket{\P(x)}{\P(y)}$, rather than the vectors themselves.  

The set of $\sigma'$ satisfying this condition can be hard to deal with, so previous works 
have typically relaxed this condition and used an output condition that defines a larger, 
simpler set.
For example, the original Ambainis output condition minimized over $\sigma'$ satisfying 
$\ell_\infty(\sigma - \sigma') \le 2 \sqrt{\epsilon}$ for error parameter $\epsilon$.  
A stronger output condition based on the $\gamma_2$ norm that 
$\gamma_2(\sigma - \sigma') \le 2 \sqrt{\epsilon}$ was introduced in 
\cite{HLS07}.  As $\gamma_2(v) \ge \ell_\infty(v)$, this output condition defines a smaller set.  The $\gamma_2$ output condition was later shown to be approximately tight in the sense that if 
$\gamma_2(\sigma -\sigma') \le \epsilon$, then there is a unitary $V$ such that 
$\bra{\sigma_x}V\ket{\sigma_x'} \ge 1- 2\sqrt{\epsilon}$ for all $x$ \cite{LMRSS11}.  
While approximately tight in the bounded-error setting, this condition is not strong enough for 
proving strong direct product theorems, where we need to obtain non-trivial bounds for 
exponentially small success probabilities.

Here we work with the full output condition and express it in an alternative form that is easier 
to handle.  As a side effect, our new characterization provides an alternative proof that the  
$\gamma_2$ output condition is tight in the bounded-error setting, and improves the parameters 
given in \cite{LMRSS11}.

\begin{claim}\label{claim:fidelity}
Let $\{\ket{a_x}\},\{\ket{b_x}\}$ be two sets of vectors, and $\rho,\sigma$ their corresponding Gram matrices.
 \begin{equation}
\max_V\min_x \Re(\bra{a_x}V\ket{b_x})=
\min_{u:\norm{u}=1} \F(\rho\circ uu^*,\sigma\circ uu^*) \enspace,
\end{equation}
where the maximization is taken over all unitaries $V$.
\end{claim}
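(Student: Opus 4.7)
The plan is to recast both sides as optimizations and link them via a trace-norm/fidelity identity. First, I would rewrite the inner minimum as $\min_x g(x) = \min_{p\in\Delta} \sum_x p_x g(x)$, where $\Delta$ is the probability simplex on $\D$. Setting $M_p := \sum_x p_x \ket{b_x}\bra{a_x}$, we have $\sum_x p_x \bra{a_x}V\ket{b_x} = \Tr(VM_p)$, so the left-hand side becomes $\max_V \min_{p\in\Delta} \Re\Tr(VM_p)$. The set of unitaries is compact but not convex, so to apply a minimax theorem I would relax $V$ to the set of contractions $\{V:\|V\|\le 1\}$, which is convex and compact. Polar decomposition shows $\max_{\|V\|\le 1}\Re\Tr(VM) = \|M\|_1$ and is attained at a unitary, so the outer maximum is unchanged by this relaxation. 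With $f(V,p) := \Re\Tr(VM_p)$ bilinear, Sion's minimax theorem gives
\begin{equation*}
\max_V \min_{p\in\Delta} f(V,p) = \min_{p\in\Delta} \max_V f(V,p) = \min_{p\in\Delta} \|M_p\|_1.
\end{equation*}

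Next, I would convert $\|M_p\|_1$ into a fidelity. Let $A,B$ be the matrices whose columns are $\{\ket{a_x}\},\{\ket{b_x}\}$, so $\rho = A^\dagger A$ and $\sigma = B^\dagger B$. Setting $u_x := \sqrt{p_x}$, we get $M_p = B\,\mathrm{diag}(u)^2 A^\dagger = \tilde B \tilde A^\dagger$ for $\tilde A = A\,\mathrm{diag}(u)$, $\tilde B = B\,\mathrm{diag}(u)$, and then $\tilde A^\dagger\tilde A = \rho\circ uu^T$, $\tilde B^\dagger\tilde B = \sigma\circ uu^T$. The identity
\begin{equation*}
\|XY^\dagger\|_1 = \F(X^\dagger X, Y^\dagger Y)
\end{equation*}
follows from $\|XY^\dagger\|_1 = \Tr\sqrt{XY^\dagger Y X^\dagger}$ combined with the fact that $XY^\dagger Y X^\dagger$ shares its nonzero spectrum with $\sqrt{X^\dagger X}\,Y^\dagger Y\,\sqrt{X^\dagger X}$ by the cyclic rule $AB\sim BA$ on nonzero eigenvalues. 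Applied with $X=\tilde B$, $Y=\tilde A$, this yields $\|M_p\|_1 = \F(\rho\circ uu^T,\sigma\circ uu^T)$, so the LHS equals the minimum of this quantity over nonnegative real unit vectors $u$.

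Finally, I would extend the minimization to all complex unit vectors. For arbitrary unit $u$, write $u_x = |u_x|e^{i\theta_x}$ and let $\Theta = \mathrm{diag}(e^{i\theta_x})$; then $\rho\circ uu^* = \Theta(\rho\circ |u||u|^T)\Theta^\dagger$ and similarly for $\sigma$. Since fidelity is invariant under a common unitary conjugation of both arguments, $\F(\rho\circ uu^*, \sigma\circ uu^*)$ depends on $u$ only through $|u|$, so minimizing over complex unit vectors coincides with minimizing over nonnegative real unit vectors, matching the simplex parametrization above. This closes the chain of equalities.

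The main obstacle is the minimax step: unitaries do not form a convex set, so the contraction relaxation together with polar decomposition is needed to ensure nothing is lost and to justify Sion's theorem. The trace-norm-to-fidelity identity and the phase reduction from complex to nonnegative $u$ are short but essential observations; everything else is direct computation.
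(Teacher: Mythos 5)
Your proof follows essentially the same path as the paper's: exchange the max and min (the paper cites SDP duality where you invoke Sion's theorem after convexifying from unitaries to contractions -- the same move), reduce the inner maximum to a trace norm $\trnorm{M_p}$, and then pass to fidelity via $\trnorm{XY^*}=\F(X^*X,Y^*Y)$; the explicit phase-absorption step reducing complex $u$ to nonnegative $u$ is a nice clarification the paper leaves implicit. The one place to tighten is the assertion that ``the outer maximum is unchanged'' -- polar decomposition only shows that for each \emph{fixed} $p$ the inner maximum over contractions is attained at a unitary, which does not by itself give $\max_{V\ \mathrm{unitary}}\min_p f(V,p)=\max_{\norm{V}\le 1}\min_p f(V,p)$ (the minimizing $p$ may move with $V$, and the maximum of the concave function $\min_p f(\cdot,p)$ over a convex body need not sit at an extreme point); the clean fix is to note that both sides of the claim depend only on the Gram matrices, so one may enlarge the ambient space and replace any optimal contraction by a unitary dilation that preserves the relevant inner products -- the paper's ``one can show'' glosses over the same point.
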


\begin{proof}
By writing the left hand side as a semidefinite program and taking the dual one can show that
\[
\max_V\min_x \Re(\bra{a_x}V\ket{b_x})=\min_{u: \norm{u}=1} \max_V 
\Re(\Tr(V \sum_x |u_x|^2 \ket{a_x}\bra{b_x})) \enspace .
\]

Letting $D(u)$ be a diagonal matrix with entries given by $u$, we can rewrite the right 
hand side of this last expression as
\begin{align*}
 \max_V \min_x \Re(\bra{a_x}V\ket{b_x})
&=\min_{u: \norm{u}=1} 
\trnorm{A D(u)  (BD(u))^*} \enspace,
\end{align*}
where $A=\sum_x \ket{a_x}\bra{x}$ and $B=\sum_x \ket{b_x}\bra{x}$.
Since $\rho=A^* A$, $\sigma=B^* B$ and $\rho\circ uu^*=D(u)^*\rho D(u)$, the claim follows using 
\[
\trnorm{XY^*}=\trnorm{(X^*X)^{1/2}(Y^*Y)^{1/2}}
\]
and the definition of the fidelity $\F(X^*X,Y^*Y)=\trnorm{(X^*X)^{1/2}(Y^* Y)^{1/2}}$.
\end{proof}

The following quantities then give lower bounds for $\epsilon$-error  
coherent quantum state generation:
\begin{definition}[Additive and multiplicative bounds]
\begin{align*}
 \ADV_\epsilon(\sigma)&=\min_\rho\blah(\rho)\\
 \MADV_\epsilon(\sigma)&=\min_\rho\MADV(\rho),
\end{align*}
where both minimizations are over Gram matrices $\rho$ such that
\begin{align*}
 \min_{u:\norm{u}=1} \F(\rho\circ uu^*,\sigma\circ uu^*)\ge \sqrt{1-\epsilon}.
\end{align*}
\end{definition}

In light of \claimref{claim:fidelity}, we can slightly improve one of the bounds in~\cite[Lemma 4.8]{LMRSS11}, which compares the tight output condition based on the fidelity to the output condition based on the factorization norm $\gamma_2$.
\begin{claim}
Let $\{\ket{a_x}\},\{\ket{b_x}\}$ be two sets of vectors, and $\rho,\sigma$ their corresponding Gram matrices.  Say that $\sqrt{1-\epsilon}=\max_V\min_x \Re(\bra{a_x}V\ket{b_x})$, where the maximization is taken over all unitary matrices $V$.
Then
\[
1-\sqrt{1-\epsilon} \le \frac{1}{2}\gamma_2(\rho-\sigma) \le \sqrt{\epsilon},
\]

\end{claim}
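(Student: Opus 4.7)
The plan is to derive both inequalities from \claimref{claim:fidelity} by combining it with the two Fuchs--van de Graaf inequalities
\[
 1-\F(P,Q) \;\leq\; \tfrac12\trnorm{P-Q} \;\leq\; \sqrt{1-\F(P,Q)^2},
\]
valid for density matrices $P,Q$. Assuming the $\ket{a_x},\ket{b_x}$ are unit (so that $\rho,\sigma$ have unit diagonals), each $\rho\circ uu^*$ and $\sigma\circ uu^*$ is a genuine density matrix for unit $u$, and \claimref{claim:fidelity} translates the hypothesis into $\min_{\|u\|=1}\F(\rho\circ uu^*,\sigma\circ uu^*)=\sqrt{1-\epsilon}$.

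For the lower bound I would pick a unit vector $w$ attaining this minimum. The left Fuchs--van de Graaf inequality then gives $1-\sqrt{1-\epsilon}\leq\tfrac12\trnorm{(\rho-\sigma)\circ ww^*}$, and since $\gamma_2(\rho-\sigma)=\max_{\|u\|=\|v\|=1}\trnorm{(\rho-\sigma)\circ uv^*}$ is at least its value at $u=v=w$, the lower bound of the claim follows immediately.

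For the upper bound I would fix the optimal unitary $V$ from \claimref{claim:fidelity}, set $c_x:=Vb_x$ (so $\sigma=C^*C$ with $p_x:=\Re\langle a_x,c_x\rangle\geq\sqrt{1-\epsilon}$ for every $x$), and bound $\trnorm{(\rho-\sigma)\circ uv^*}$ uniformly in unit $u,v$. Introducing $X:=AD(u)^*$, $X':=CD(u)^*$, $Y:=AD(v)^*$, $Y':=CD(v)^*$, one has $(\rho-\sigma)\circ uv^*=X^*Y-X'^*Y'$, and the key step is the symmetric polarization
\[
 X^*Y - X'^*Y' \;=\; \tfrac12\bigl[(X+X')^*(Y-Y')+(X-X')^*(Y+Y')\bigr].
\]
Direct calculation gives $\|X\pm X'\|_F^2=2(1\pm\alpha)$ and $\|Y\pm Y'\|_F^2=2(1\pm\beta)$, where $\alpha:=\sum_x|u_x|^2 p_x$ and $\beta:=\sum_x|v_x|^2 p_x$. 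Applying $\trnorm{MN}\leq\|M\|_F\|N\|_F$ to each summand and then $\sqrt a+\sqrt b\leq\sqrt{2(a+b)}$ together with the identity $(1+\alpha)(1-\beta)+(1-\alpha)(1+\beta)=2(1-\alpha\beta)$ yields
\[
 \trnorm{(\rho-\sigma)\circ uv^*} \;\leq\; \sqrt{(1+\alpha)(1-\beta)} + \sqrt{(1-\alpha)(1+\beta)} \;\leq\; 2\sqrt{1-\alpha\beta} \;\leq\; 2\sqrt{\epsilon},
\]
where the last step uses $\alpha,\beta\geq\sqrt{1-\epsilon}$, giving $\tfrac12\gamma_2(\rho-\sigma)\leq\sqrt{\epsilon}$.

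The main obstacle is choosing the right decomposition. The naive one-sided splitting $X^*Y-X'^*Y'=X^*(Y-Y')+(X-X')^*Y'$ combined with Cauchy--Schwarz only produces $\tfrac12\gamma_2\leq\sqrt{2(1-\sqrt{1-\epsilon})}$, which is a factor of $\sqrt 2$ too large. The symmetric polarization balances the two sides so that the algebraic identity $(1+\alpha)(1-\beta)+(1-\alpha)(1+\beta)=2(1-\alpha\beta)$ becomes available; this identity is exactly what converts the pointwise inner-product condition $p_x\geq\sqrt{1-\epsilon}$ into the tight constant $\sqrt\epsilon$ after one Cauchy--Schwarz.
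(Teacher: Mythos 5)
Your proof is correct. On the lower bound you take exactly the paper's route: \claimref{claim:fidelity} translates the hypothesis into a minimum fidelity of $\sqrt{1-\epsilon}$, and the first Fuchs--van de Graaf inequality applied at a minimizing $u$, together with $\trnorm{(\rho-\sigma)\circ uu^*}\le\gamma_2(\rho-\sigma)$, finishes. The upper bound is where you genuinely diverge. The paper also routes through fidelity, applying the \emph{second} Fuchs--van de Graaf inequality to the states $\rho\circ uu^*$, $\sigma\circ uu^*$ to get $\tfrac12\trnorm{(\rho-\sigma)\circ uu^*}\le\sqrt{\epsilon}$ for every unit $u$; implicit in that one-liner is the fact that for a Hermitian matrix $A$ the supremum defining $\gamma_2(A)=\max_{u,v}\trnorm{A\circ uv^*}$ is attained on the diagonal $u=v$. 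You instead control $\trnorm{(\rho-\sigma)\circ uv^*}$ directly for arbitrary unit $u,v$: the polarization identity $X^*Y-X'^*Y'=\tfrac12\bigl[(X+X')^*(Y-Y')+(X-X')^*(Y+Y')\bigr]$, Cauchy--Schwarz for the trace norm, and the algebraic identity $(1+\alpha)(1-\beta)+(1-\alpha)(1+\beta)=2(1-\alpha\beta)$ give $\trnorm{(\rho-\sigma)\circ uv^*}\le 2\sqrt{1-\alpha\beta}\le 2\sqrt{\epsilon}$. This is more elementary and self-contained---it sidesteps the Hermitian $\gamma_2$ symmetry fact and on this side does not even invoke \claimref{claim:fidelity}, only the existence of the optimal $V$. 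The paper's version is shorter if one is willing to cite that symmetry fact together with the standard fidelity inequalities; both yield identical constants.
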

\begin{proof}
This directly follows from \claimref{claim:fidelity} and the relation between the trace distance 
and fidelity.
\[
1-\F(\rho\circ uu^*,\sigma\circ uu^*) \le \frac{1}{2}\trnorm{(\rho-\sigma)\circ uu^*} \le \sqrt{1-\F(\rho\circ uu^*,\sigma\circ uu^*)^2} \enspace . 
\]
\end{proof}

Note that an adversary matrix $\Gamma$ yields a good zero-error multiplicative adversary bound if $\Tr(\Gamma(\sigma\circ\evector\evector^*))$ is large. To obtain a bound for $\epsilon$-error 
algorithms, we need to show that $\Tr(\Gamma(\rho\circ\evector\evector^*))$ remains large for any Gram matrix $\rho$ such that $\F(\rho\circ uu^*,\sigma\circ uu^*)\ge \sqrt{1-\epsilon}$ for all unit vectors $u$. The following lemma will be useful.
\begin{lemma}\label{lem:expectation-fidelity}
Let $p,q$ be two distributions for a discrete random variable $A$ taking values in 
$\R^+_0$. If $\F(p,q)\geq\sqrt{\delta}$, then
\begin{align*}
 E_q(A)\geq \delta \left[E_p(A^{-1})\right]^{-1}.
\end{align*}
\end{lemma}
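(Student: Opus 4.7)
The plan is to reduce this to a clean application of the Cauchy--Schwarz inequality. Writing $a_x$ for the value of $A$ at outcome $x$, the fidelity has the form
\begin{align*}
\F(p,q) = \sum_x \sqrt{p_x q_x} = \sum_x \sqrt{\tfrac{p_x}{a_x}} \cdot \sqrt{q_x a_x},
\end{align*}
where I split each term $\sqrt{p_xq_x}$ into a piece involving $1/a_x$ and a piece involving $a_x$. Cauchy--Schwarz applied to this factorization immediately gives
\begin{align*}
\F(p,q)^2 \le \left(\sum_x \tfrac{p_x}{a_x}\right)\left(\sum_x q_x a_x\right) = E_p(A^{-1})\cdot E_q(A).
\end{align*}
Using the hypothesis $\F(p,q)^2 \ge \delta$ and rearranging yields $E_q(A) \ge \delta / E_p(A^{-1})$, which is the claim.

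The only thing to be careful about is values $a_x = 0$, since $1/a_x$ is then undefined. I would handle this by adopting the convention $1/0 = +\infty$: outcomes with $a_x = 0$ contribute $0$ to $E_q(A)$ but make $E_p(A^{-1}) = +\infty$ whenever $p_x > 0$, in which case the right-hand side is $0$ and the bound is trivial. If $p_x = 0$ but $a_x = 0$, the term $\sqrt{p_xq_x} = 0$ drops from the fidelity sum, so we may restrict the Cauchy--Schwarz step to $\{x : a_x > 0\}$ without loss. With this convention the argument above is rigorous, so no step is really an obstacle; the entire proof is a one-line Cauchy--Schwarz estimate on the fidelity, with the only subtlety being the bookkeeping for zero values of $A$.
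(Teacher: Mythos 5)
Your proof is correct, and it takes a genuinely different and more elementary route than the paper. The paper formulates the problem of minimizing $E_q(A)$ subject to the fidelity constraint as a semidefinite program over density matrices, passes to the Lagrangian dual, sets one multiplier to zero, and then identifies the resulting spectral constraint $D(A)\succeq\lambda\proj{u}$ with $\lambda\norm{w}^2\le 1$ for $\ket{w}=\sum_i\sqrt{p_i/a_i}\ket{i}$. Your argument short-circuits all of this: the single factorization $\sqrt{p_xq_x}=\sqrt{p_x/a_x}\cdot\sqrt{q_xa_x}$ plus Cauchy--Schwarz yields $\F(p,q)^2\le E_p(A^{-1})\,E_q(A)$ directly, from which the lemma is immediate. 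Interestingly, the vector $\ket{w}$ that appears in the paper's dual witness is exactly the vector whose norm shows up in your Cauchy--Schwarz estimate, so the two proofs are ``the same'' at the level of the witnessing object, but you extract the bound without invoking SDP duality. Your treatment of the $a_x=0$ edge case (with the convention $1/0=+\infty$) is also cleaner than the paper's, which implicitly assumes $D(A)$ is invertible when it rewrites the dual constraint. The only thing the SDP formulation offers beyond your argument is the information that the bound is in fact the exact optimum of the relaxed dual, but that is not needed for the lemma as stated, so your shorter proof is strictly preferable here.
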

\begin{proof}
Let $p_i=\Pr_p[A=a_i]$ and $q_i=\Pr_q[A=a_i]$. We need to lower bound the value of the following optimization program:
\begin{align*}
 \begin{aligned}
\underset{q_i\geq 0:\ \sum_iq_i=1}{\mathrm{minimize}}\ \sum_i q_i a_i
& \ \mathrm{subject\  to}
& & \F(p,q)\geq\sqrt{\delta}.
\end{aligned}
\end{align*}
Introducing vectors $\ket{u}=\sum_i\sqrt{p_i}\ket{i}$ and $\ket{v}=\sum_i\sqrt{q_i}\ket{i}$, and 
letting $D(A)$ be a diagonal matrix with the support of $A$ along the diagonal, this can be 
rewritten as
\begin{align*}
 \begin{aligned}
\underset{\ket{v}: \norm{v}=1}{\mathrm{minimize}}\ \bra{v}D(A)\ket{v}
& \ \mathrm{subject\  to}
& & \abs{\braket{u}{v}}^2\geq\delta\\
=
\underset{\rho\succeq 0: \Tr\rho=1}{\mathrm{minimize}}\ \Tr[D(A)\rho]
& \ \mathrm{subject\  to}
& & \Tr[\proj{u}\rho]\geq\delta.
\end{aligned}
\end{align*}
This is a semidefinite program, whose dual can be written as
\begin{align*}
 \begin{aligned}
\underset{\lambda\geq 0,\mu}{\mathrm{maximize}}\ \lambda \delta+\mu
& \ \mathrm{subject\  to}
& & D(A)\succeq \lambda\proj{u}+\mu I.
\end{aligned}
\end{align*}
Setting $\mu=0$, this is at least
\begin{align*}
 \begin{aligned}
\delta\ \underset{\lambda\geq 0}{\mathrm{maximize}}\ \lambda 
& \ \mathrm{subject\  to}
& & D(A) \succeq \lambda\proj{u}.
\end{aligned}
\end{align*}
Let $\ket{w}=\sum_i \sqrt{p_i/a_i}\ket{i}$.  The constraint 
is equivalent to $I \succeq \lambda\proj{w}$, which in turn is equivalent to 
$\lambda \norm{\proj{w}} = \lambda \norm{w}^2 \le 1$. The lemma then follows from $\norm{w}^2=\sum_i p_i a_i^{-1}$.
\end{proof}

To apply this lemma, we need an upper bound on $E_p[A^{-1}]$.  In our applications, we 
usually do not know explicitly the distribution $p$, but we do know its expectation and the 
extremal values in its support.   The next claim allows us to upper bound 
$E_p[A^{-1}]$ in terms of these quantities.

\begin{claim}\label{claim:bound-expectation}
 Let $0<a_0\leq \bar{a}\leq a_1$, and $A$ be a random variable taking values in $S\subseteq[a_0,a_1]$. If $E_p(A)=\bar{a}$, then $E_p(A^{-1})\leq \tfrac{a_0+a_1-\bar{a}}{a_0a_1}$.
\end{claim}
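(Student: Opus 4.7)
The plan is to exploit convexity of the function $x \mapsto 1/x$ on the positive reals. Since we want to \emph{upper bound} $E_p(A^{-1})$ given only the mean $\bar a$ and the extremal support values, the standard approach is to compare $1/x$ against its secant line through the endpoints $a_0$ and $a_1$.

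First I would note that, by convexity of $1/x$, for every $x \in [a_0, a_1]$ the point $(x, 1/x)$ lies below the chord joining $(a_0, 1/a_0)$ and $(a_1, 1/a_1)$. Writing $x = \tfrac{a_1-x}{a_1-a_0}\,a_0 + \tfrac{x-a_0}{a_1-a_0}\,a_1$, this yields the pointwise bound
\begin{equation*}
\frac{1}{x} \;\le\; \frac{1}{a_0}\cdot\frac{a_1-x}{a_1-a_0} \;+\; \frac{1}{a_1}\cdot\frac{x-a_0}{a_1-a_0}.
\end{equation*}
Then I would take the expectation under $p$ of both sides. Since the right-hand side is linear in $x$, the expectation only depends on $E_p(A)=\bar a$, giving
\begin{equation*}
E_p(A^{-1}) \;\le\; \frac{1}{a_0}\cdot\frac{a_1-\bar a}{a_1-a_0} \;+\; \frac{1}{a_1}\cdot\frac{\bar a-a_0}{a_1-a_0}.
\end{equation*}
Finally a short algebraic simplification --- combining over the common denominator $a_0 a_1(a_1-a_0)$ and factoring $(a_1-a_0)$ out of the numerator $a_1^2 - a_0^2 - \bar a(a_1-a_0)$ --- produces exactly $(a_0 + a_1 - \bar a)/(a_0 a_1)$.

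There is no real obstacle here: the only content is the secant-line inequality, which is precisely the extremal characterization of the two-point distribution on $\{a_0,a_1\}$ that maximizes $E[A^{-1}]$ among distributions on $[a_0,a_1]$ with fixed mean $\bar a$. The hypothesis $a_0 \le \bar a \le a_1$ is needed only to ensure that this two-point distribution exists with nonnegative weights, which is implicit in the convex combination above.
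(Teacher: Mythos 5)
Your proof is correct, and the algebra checks out: combining over the denominator $a_0 a_1(a_1-a_0)$ gives numerator $a_1(a_1-\bar a)+a_0(\bar a - a_0)=(a_1-a_0)(a_0+a_1-\bar a)$, and the factor $(a_1-a_0)$ cancels. The paper proves the same bound by writing $\max_p E_p(A^{-1})$ as a linear program with equality constraints $\sum_a p_a a = \bar a$, $\sum_a p_a = 1$, passing to the dual (minimize $\lambda-\bar a\mu$ subject to $\mu a^2-\lambda a+1\le 0$ on $S$), and exhibiting the feasible dual point $\lambda = \tfrac{a_0+a_1}{a_0 a_1}$, $\mu=\tfrac{1}{a_0 a_1}$. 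Your secant-line argument is really this same dual solution in disguise --- the dual constraint $\mu a^2-\lambda a+1\le 0$ is equivalent to $1/a\le \lambda-\mu a$, i.e.\ a linear upper bound on $1/a$, and the chosen $\lambda,\mu$ are exactly the coefficients of your chord through $(a_0,1/a_0)$ and $(a_1,1/a_1)$ --- but your presentation is more elementary and self-contained: it makes the extremality of the two-point distribution transparent and avoids invoking LP duality. One minor point you gloss over: when $a_0=a_1$ the chord formula is $0/0$, but that degenerate case is trivial since $A$ is then constant.
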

\begin{proof}
 $E_p(A^{-1})$ is at most the value of the following linear program:
\begin{align*}
 \begin{aligned}
\underset{p_a\geq 0}{\mathrm{maximize}}\ \sum_{a\in S}p_a a^{-1}
& \ \mathrm{subject\  to}
& &\sum_{a\in S}p_a a =\bar a, \quad \sum_{a\in S}p_a =1.
\end{aligned}
\end{align*}
The dual program can be written as
\begin{align*}
 \begin{aligned}
\underset{\lambda,\mu}{\mathrm{minimize}}\ \lambda-\bar a \mu
& \ \mathrm{subject\  to}
& & \mu a^2-\lambda a+1\leq 0\ \forall a\in S.
\end{aligned}
\end{align*}
Since $a_0\leq a\leq a_1$, the constraint is satisfied for $\lambda=\tfrac{a_0+a_1}{a_0a_1}$ and $\mu=\tfrac{1}{a_0a_1}$, which leads to $E_p(A^{-1})\leq \tfrac{a_0+a_1-\bar{a}}{a_0a_1}$.
\end{proof}

Putting the last two claims together, we get the following corollary which is key to our strong 
direct product theorem.
\begin{corollary}\label{cor:expectation-fidelity}
Let $a_1\ge a_0 >0$ and $p$ be a distribution for a random variable $A$ taking values in $[a_0,a_1]$.  
If $E_p[A]=\bar a$ and $q$ is a distribution over $(\R^+_0)^k$ such that 
$\F(p^{\otimes k},q)\geq\sqrt{\delta^k}$, then
\begin{align*}
 E_q(\Pi_{l=1}^k A_l)\geq  \left(\frac{\delta a_0 a_1}{a_0+a_1-\bar a}\right)^k.
\end{align*}
\end{corollary}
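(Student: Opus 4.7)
The plan is to apply \lemref{lem:expectation-fidelity} to the two distributions $p^{\otimes k}$ and $q$, viewed through the single scalar random variable $B=\prod_{l=1}^k A_l$, and then exploit independence under $p^{\otimes k}$ to factorize the bound coordinatewise. Concretely, let $\pi:(\R^+_0)^k\to\R^+_0$ be the product map $(a_1,\ldots,a_k)\mapsto\prod_l a_l$, and let $\tilde p=\pi_\ast(p^{\otimes k})$ and $\tilde q=\pi_\ast q$ be the pushforward distributions of $B$. By the data-processing inequality for fidelity (equivalently, the monotonicity of $\F$ under classical channels), $\F(\tilde p,\tilde q)\ge \F(p^{\otimes k},q)\ge \sqrt{\delta^k}$, which is exactly the hypothesis we need to feed into \lemref{lem:expectation-fidelity}.

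Applying that lemma with $\delta^k$ in place of $\delta$ yields
\[
E_q\!\left(\prod_{l=1}^k A_l\right)=E_{\tilde q}(B)\ \ge\ \delta^k\,\bigl[E_{\tilde p}(B^{-1})\bigr]^{-1}=\delta^k\,\bigl[E_{p^{\otimes k}}(\Pi_l A_l^{-1})\bigr]^{-1}.
\]
Next, because the $A_l$ are independent under $p^{\otimes k}$, the expectation of the product of the $A_l^{-1}$ factorizes as
\[
E_{p^{\otimes k}}\!\left(\prod_{l=1}^k A_l^{-1}\right)=\prod_{l=1}^k E_p(A_l^{-1})=\bigl(E_p(A^{-1})\bigr)^k.
\]
Finally, since $A$ takes values in $[a_0,a_1]$ with $E_p(A)=\bar a$, \claimref{claim:bound-expectation} gives the upper bound $E_p(A^{-1})\le (a_0+a_1-\bar a)/(a_0 a_1)$. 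Substituting this in yields
\[
E_q\!\left(\prod_{l=1}^k A_l\right)\ \ge\ \delta^k\left(\frac{a_0 a_1}{a_0+a_1-\bar a}\right)^{\!k}=\left(\frac{\delta\, a_0 a_1}{a_0+a_1-\bar a}\right)^{\!k},
\]
which is the claimed bound.

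There is essentially no deep obstacle here; the proof is a direct assembly of the two preceding results. The only subtlety worth flagging is the reduction from the vector-valued setting (with $q$ on $(\R^+_0)^k$) to the scalar setting required by \lemref{lem:expectation-fidelity}: this is handled by pushing both distributions forward under the product map and invoking monotonicity of the fidelity. Everything else—factorization of the inverse-moment under $p^{\otimes k}$ and the application of \claimref{claim:bound-expectation} to each marginal—is immediate from independence.
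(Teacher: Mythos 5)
Your proof is correct and matches the paper's intended (implicit) derivation: push both distributions forward to the scalar variable $B=\prod_l A_l$ so that \lemref{lem:expectation-fidelity} applies, factorize the inverse moment under the product measure, and invoke \claimref{claim:bound-expectation} on each factor. The only nontrivial ingredient---monotonicity of the classical fidelity under the deterministic product map---is exactly the right observation to bridge the vector-valued hypothesis to the scalar lemma, and you flagged it appropriately.
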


\subsection{Comparison of the adversary bounds}

Let us first prove a variation of the result by~\cite{AMRR11} that the multiplicative adversary bound 
is stronger than the additive bound.  The main difference with~\cite{AMRR11} is that this claim 
relies on the bound $\blah(\sigma)$ which is potentially stronger for general quantum state 
generation problems.
\begin{claim}[\cite{AMRR11}]\label{claim:madvadv-limit}
For any $\epsilon>0$ and any state generation problem $\sigma$, we have
 \[
\MADV(\sigma) \ge (1-\epsilon) \blah(\sigma) .
\]
\end{claim}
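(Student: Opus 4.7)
The plan is to convert an optimal additive adversary matrix $\Gamma$ into a feasible point of the multiplicative program, in the spirit of \cite{AMRR11}. Since both the additive constraints and $\|\Gamma \circ (J-\sigma)\|$ are invariant under $\Gamma \mapsto -\Gamma$, I may assume there is a unit vector $v$ with $\langle v, (\Gamma \circ \sigma)v\rangle - \langle v, \Gamma v\rangle = L$, where $L := \blah(\sigma)$. For a parameter $t \in (0, 1/\|\Gamma\|)$ to be chosen, set $\Gamma' = I + t\Gamma$; then $\Gamma' \succ 0$, which is the basic structural requirement for the multiplicative program.

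Next I would verify that $(\Gamma', v, c)$ is feasible for \defref{def:mult_exact} with $c = 1 + s$ for a small $s > 0$. Since the diagonal of $\Delta_i$ is all ones, $I \circ \Delta_i = I$, hence $\Gamma' \circ \Delta_i = I + t(\Gamma \circ \Delta_i)$. The additive constraint $I \pm \Gamma \circ (J - \Delta_i) \succeq 0$ rewrites as $-I \preceq \Gamma - \Gamma \circ \Delta_i \preceq I$. Substituting, the right multiplicative inequality $c\Gamma' \succeq \Gamma' \circ \Delta_i$ reduces to $(c-1-t)I + t(c-1)\Gamma \succeq 0$, and the left inequality to a slightly stronger scalar condition. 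Using $\lambda_{\min}(\Gamma) \ge -\|\Gamma\|$, both are implied by $s \ge t/(1 - t(\|\Gamma\|+1))$, so $s$ can be taken arbitrarily close to $t$ as $t \to 0$.

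With feasibility in hand, I would estimate the multiplicative value. Because $\sigma$ is the Gram matrix of unit vectors, $I \circ \sigma = I$, and so
\[
\frac{\Tr((\Gamma' \circ \sigma) v v^*)}{\Tr(\Gamma' v v^*)} = \frac{1 + t\langle v, (\Gamma \circ \sigma) v\rangle}{1 + t\langle v, \Gamma v\rangle} = 1 + \frac{tL}{1 + t\langle v, \Gamma v\rangle}.
\]
Applying $\ln(1+x) \ge x/(1+x)$ in the numerator and $\ln(1+s) \le s$ in the denominator, the objective $\tfrac{1}{\ln c} \ln(\cdot)$ is at least $\frac{tL}{s(1 + t\langle v, (\Gamma \circ \sigma)v\rangle)}$. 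Taking $t \to 0$ with $s$ at its feasibility threshold forces $t/s \to 1$ and $t\langle v, (\Gamma \circ \sigma)v\rangle \to 0$, so this ratio tends to $L$. Given any $\epsilon > 0$, taking $t$ sufficiently small therefore yields $\MADV(\sigma) \ge (1-\epsilon)L$, as required.

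The principal obstacle is the joint tuning of $t$ and $s$: $s$ must be large enough to enforce both semidefinite constraints but close enough to $t$ that $t/s$ approaches $1$ in the limit, without the constants $\|\Gamma\|$ and $\langle v, \Gamma v\rangle$ causing trouble. Once the sufficient condition $s \ge t/(1 - t(\|\Gamma\|+1))$ is established, the value estimate is a routine application of elementary inequalities. The fact that one may take $\epsilon$ arbitrarily small — but \emph{not} zero — reflects precisely the limiting behavior $c \to 1$ that degrades the strong direct product theorem, and is the motivation for the sharper comparison proved elsewhere in the paper.
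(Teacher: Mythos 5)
Your proposal is correct and follows essentially the same route as the paper's proof: take an optimal additive witness $\Gamma$, form a positive definite affine combination of $I$ and $\Gamma$, verify the multiplicative constraints hold with $c = 1+\gamma$ using the additive constraints together with a Hadamard product with $\Delta_i$, and take the limit $\gamma \to 0^+$ to recover $\blah(\sigma)$ up to an arbitrarily small loss. The only cosmetic differences are that you handle the eigenvalue sign by replacing $\Gamma$ with $-\Gamma$ and work with the unnormalized $I + t\Gamma$ (implicitly rescaling to meet $\Tr(\Gamma_m vv^*)=1$), whereas the paper first centers $\Gamma$ so that $\Tr((\Gamma'\circ\sigma)vv^*)=0$ and writes the witness in normalized form, which is what makes the constants $d,b$ explicit for re-use in the quantitative \claimref{claim:madvadv}.
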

\begin{proof}
Let $\Gamma$ be an optimal witness for $\blah(\sigma)=b$, and $\evector$ be the principal eigenvector of $\Gamma\circ(J-\sigma)$. Note that we may assume without loss of generality that $\evector$ corresponds to a positive eigenvalue of $\Gamma\circ(J-\sigma)$. Let $\Gamma'=\Gamma-\Tr((\Gamma\circ\sigma)\evector\evector^*) I$, and notice that $\Gamma'$ is also a witness for $\blah(\sigma)=b$, satisfying $\Tr(\Gamma'\evector\evector^*)=b$ and $\Tr((\Gamma'\circ\sigma)\evector\evector^*)=0$. Let $d=\norm{\Gamma'}$ and note that $d \ge b$. Finally, define $\Gamma_m=(I+\gamma(dI-\Gamma'))/(1+\gamma(d-b))$. Therefore, we have $\Tr(\Gamma_m \evector\evector^*)=1$ and $\Tr((\Gamma_m\circ\sigma)\evector\evector^*)=(1+\gamma d)/(1+\gamma(d-b))$.  

We now show that the condition 
$c^{-1}\Gamma_m\preceq \Gamma_m \circ \Delta_i\preceq c\Gamma_m$ is satisfied for $c=1+\gamma$.  We show $(1+ \gamma(d-b)) (\Gamma_m \circ (c\Delta_i -J)) \succeq 0$ which 
implies $\Gamma_m \circ (c\Delta_i -J) \succeq 0$ as $1+\gamma(d-b) >0$.  
\begin{align*}
(1+ \gamma(d-b)) (\Gamma_m \circ (c\Delta_i -J))&= 
\Big((1+\gamma d) I - \gamma \Gamma' \Big) \circ \Big((\Delta_i-J) + \gamma \Delta_i\Big)\\
&= \gamma(I+\Gamma' \circ (J-\Delta_i)) + \gamma^2 ( dI - \Gamma') \circ \Delta_i.
\end{align*}
From the constraint of the additive metric we know that 
$I + \Gamma' \circ (J-\Delta_i) \succeq 0$ for all $i \in [n]$.  Also as $dI - \Gamma' \succeq 0$ , taking 
the Hadamard product with $\Delta_i\succeq 0$ gives $(d I - \Gamma')\circ \Delta_i \succeq 0$, by 
Property~2 in \claimref{claim:hadamard-product-properties}. Therefore, we have 
$\Gamma_m \circ (c\Delta_i -J)\succeq 0$.  One can show 
$\Gamma_m \circ (cJ -\Delta_i)\succeq 0$ in a similar fashion. This implies that $\Gamma_m$ is a witness for
 \[
\MADV(\sigma) \ge \frac{\ln \left(\tfrac{1+\gamma d}{1+\gamma (d-b)}\right)}{\ln (1+\gamma)}.
\]
The right hand side tends to $b=\blah(\sigma)$ in the limit $\gamma\to 1$, therefore, by continuity, for any $\epsilon>0$ there exists $\gamma$ such that $\MADV(\sigma)\geq (1-\epsilon)\blah(\sigma)$.
\end{proof}
Adapting results from~\cite{Spa08,AMRR11}, this implies a strong direct product theorem for $\MADV(\sigma)$ as long as the bound is obtained for $c=1+\Omega(1/\blah(\sigma))$. Unfortunately, showing that we can take $c$ bounded away from $1$ requires bounding 
$d=\norm{\Gamma'}$, which we do not know how to do for a general state generation problem 
$\sigma$.   In general, we can only use this statement in the limit $c\to 1$, in which case the direct product theorem degrades into a direct sum theorem.  This is why \cite{AMRR11} were not 
able to conclude a strong direct product theorem.  

We observe that for interesting cases such as 
$F$ or $\sigma_f$, we {\em can} bound the norm of the witness $\Gamma'$ using the following 
claim.  

\begin{claim}\label{claim:witness}
Suppose that $(J-\sigma) \circ (J-\sigma)=\lambda (J-\sigma)$.  Then there is a matrix  
$\Gamma'$ witnessing $\lambda \frac{\blah(\sigma)}{\gamma_2(J-\sigma)}$ such that 
$\norm{\Gamma'} \le \tfrac{\blah(\sigma)}{\gamma_2(J-\sigma)}$ and $\Gamma'\circ (J-\sigma)=\lambda \Gamma'$.   
\end{claim}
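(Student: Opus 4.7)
The plan is to take an optimal witness $\Gamma$ for $\blah(\sigma)=b$ and define
\[
\Gamma' := \frac{\Gamma \circ (J-\sigma)}{\gamma_2(J-\sigma)}.
\]
Since Hadamard products preserve Hermiticity, $\Gamma'$ is Hermitian. The eigenvector-like relation $\Gamma' \circ (J-\sigma) = \lambda \Gamma'$ follows immediately from the hypothesis $(J-\sigma)\circ(J-\sigma)=\lambda(J-\sigma)$:
\[
\Gamma' \circ (J-\sigma) = \frac{\Gamma \circ (J-\sigma) \circ (J-\sigma)}{\gamma_2(J-\sigma)} = \frac{\lambda\,\Gamma \circ (J-\sigma)}{\gamma_2(J-\sigma)} = \lambda \Gamma'.
\]

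Next I would verify that $\Gamma'$ satisfies the additive adversary constraint. For Hermitian $X$, the condition $I \pm X \succeq 0$ is equivalent to $\norm{X} \le 1$, so it suffices to show $\norm{\Gamma' \circ (J-\Delta_i)} \le 1$ for each $i \in [n]$. Grouping the Hadamard product as
\[
\Gamma' \circ (J-\Delta_i) = \frac{J-\sigma}{\gamma_2(J-\sigma)} \circ \bigl(\Gamma \circ (J-\Delta_i)\bigr),
\]
Property~1 of \claimref{claim:hadamard-product-properties} bounds this by $\gamma_2\bigl((J-\sigma)/\gamma_2(J-\sigma)\bigr)\cdot \norm{\Gamma \circ (J-\Delta_i)} \le 1\cdot 1 = 1$, using homogeneity of $\gamma_2$ together with the fact that $\Gamma$ is a valid witness for $\blah(\sigma)$.

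Finally, the value $\Gamma'$ witnesses is $\norm{\Gamma' \circ (J-\sigma)} = \lambda \norm{\Gamma'}$ by the first step, and
\[
\norm{\Gamma'} = \frac{\norm{\Gamma \circ (J-\sigma)}}{\gamma_2(J-\sigma)} = \frac{\blah(\sigma)}{\gamma_2(J-\sigma)},
\]
which simultaneously delivers the claimed witness value $\lambda\blah(\sigma)/\gamma_2(J-\sigma)$ and the stated norm bound (with equality). There is no real obstacle here; the whole argument is routine once one notices that Hadamard-multiplying by $(J-\sigma)/\gamma_2(J-\sigma)$ is exactly the rescaling that enforces the eigenvector-like relation $\Gamma'\circ(J-\sigma)=\lambda\Gamma'$ while still preserving the additive adversary constraints through the Hadamard-product norm bound of \claimref{claim:hadamard-product-properties}.
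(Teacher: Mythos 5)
Your proof is correct and follows essentially the same route as the paper's: define $\Gamma' = (\Gamma \circ (J-\sigma))/\gamma_2(J-\sigma)$ from an optimal additive witness $\Gamma$, use the hypothesis $(J-\sigma)\circ(J-\sigma)=\lambda(J-\sigma)$ to get $\Gamma'\circ(J-\sigma)=\lambda\Gamma'$, and verify feasibility via Property~1 of \claimref{claim:hadamard-product-properties}. The only cosmetic difference is that you spell out the regrouping of the Hadamard product and the homogeneity of $\gamma_2$ explicitly, which the paper leaves implicit.
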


\begin{proof}
Let $\Gamma$ be an optimal witness for $\blah(\sigma)$.  Define 
$\Gamma'=\gamma_2(J-\sigma)^{-1} (\Gamma \circ (J-\sigma))$. By assumption, we then have $\Gamma'\circ (J-\sigma)=\lambda \Gamma'$. This is a feasible witness as 
\[
\norm{\Gamma' \circ (J-\Delta_i)} \le 
\frac{\gamma_2(J-\sigma)}{\gamma_2(J-\sigma)} \norm{\Gamma \circ (J-\Delta_i)} \le 1
\]
by 
Property~1 in \claimref{claim:hadamard-product-properties}.
Furthermore, $\norm{\Gamma'}=\gamma_2(J-\sigma)^{-1}\blah(\sigma)$ and 
$\Gamma'$ witnesses a bound of 
$\lambda \norm{\Gamma'}= 
\lambda \gamma_2(J-\sigma)^{-1} \blah(\sigma)$. 
\end{proof}

For certain state generation problems including $F$ and $\sigma_f$ we are 
thus able to obtain a quantitative version of \claimref{claim:madvadv-limit}.

\begin{claim}
\label{claim:madvadv}
Suppose that $(J-\sigma) \circ (J-\sigma)=\lambda (J-\sigma)$, and let $d=\gamma_2(J-\sigma)^{-1}\blah(\sigma)$. Then, for any $\gamma>0$, there is a multiplicative witness $\Gamma_m$ and a vector $\evector$ such that
\begin{align*}
 \Tr(\Gamma_m\evector\evector^*)&=1\\
 \Tr(\Gamma_m(\sigma \circ\evector\evector^*))&= 1+\lambda \gamma d\\
I\preceq\Gamma_m&\preceq (1+2\gamma d)I,\\
c^{-1}\Gamma_m\preceq \Gamma_m \circ \Delta_i&\preceq c\ \Gamma_m \text{ for all } i,
\end{align*}
where $c=1+\gamma$.
Therefore $\Gamma_m$ satisfies the constraints of \defref{def:mult_exact} and witnesses that
\[
\MADV(\sigma) \ge \frac{\ln (1+\lambda \gamma d)}{\ln (1+\gamma)}
\]
\end{claim}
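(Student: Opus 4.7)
The plan is to apply \claimref{claim:witness} to extract a structured additive witness $\Gamma'$, and then build $\Gamma_m$ simply as $I + \gamma(dI - \Gamma')$. Concretely, I would first invoke \claimref{claim:witness} to fix $\Gamma'$ satisfying $\Gamma' \circ (J-\sigma) = \lambda \Gamma'$, $\norm{\Gamma'} = d$, and $\norm{\Gamma' \circ (J-\Delta_i)} \le 1$ for all $i$ (the last holding because $\Gamma'$ is feasible for $\blah$). Let $\evector$ be a unit eigenvector of $\Gamma'$ with eigenvalue $d$; we may assume this eigenvalue is $+d$ rather than $-d$ by replacing $\Gamma'$ with $-\Gamma'$ if necessary, an operation that preserves all three properties.

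Once $\Gamma_m = I + \gamma(dI - \Gamma')$ is in place, the first three items reduce to direct computation using two basic identities: $I \circ \sigma = I$ and $I \circ \Delta_i = I$ (both valid because $\sigma_{xx} = 1 = (\Delta_i)_{xx}$, i.e., the target states are unit vectors), and $\Gamma' \circ \sigma = (1-\lambda)\Gamma'$ (obtained from $\Gamma' = \Gamma' \circ J = \Gamma' \circ \sigma + \Gamma' \circ (J-\sigma)$). Then $\Tr(\Gamma_m \evector\evector^*) = 1 + \gamma(d - d) = 1$ and $\Tr((\Gamma_m \circ \sigma) \evector\evector^*) = 1 + \gamma d - \gamma(1-\lambda) d = 1 + \lambda \gamma d$, while $-dI \preceq \Gamma' \preceq dI$ gives $I \preceq \Gamma_m \preceq (1+2\gamma d)I$.

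The main work is in the multiplicative constraint $c^{-1}\Gamma_m \preceq \Gamma_m \circ \Delta_i \preceq c\Gamma_m$ for $c = 1 + \gamma$. Expanding $\Gamma_m \circ \Delta_i = (1+\gamma d) I - \gamma (\Gamma' \circ \Delta_i)$, the upper bound rearranges to
\[
c\Gamma_m - \Gamma_m \circ \Delta_i = \gamma\bigl[\,(I - \Gamma' \circ (J-\Delta_i)) + \gamma(dI - \Gamma')\,\bigr] \succeq 0,
\]
where both bracketed terms are PSD: the first by the additive feasibility of $\Gamma'$, the second by $\norm{\Gamma'} \le d$. Symmetrically, $(1+\gamma)(\Gamma_m \circ \Delta_i) - \Gamma_m$ simplifies to
\[
\gamma\bigl[\,(I + \Gamma' \circ (J-\Delta_i)) + \gamma(dI - \Gamma' \circ \Delta_i)\,\bigr],
\]
whose second bracketed term requires bounding $\norm{\Gamma' \circ \Delta_i}$. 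This is where the key small computation appears: by Property~1 of \claimref{claim:hadamard-product-properties}, $\norm{\Gamma' \circ \Delta_i} \le \gamma_2(\Delta_i) \norm{\Gamma'}$, and $\gamma_2(\Delta_i) = 1$ via the factorization $\Delta_i(x,y) = \braket{x_i}{y_i}$ into unit vectors. Thus $dI - \Gamma' \circ \Delta_i \succeq 0$ as well, and the whole expression is PSD.

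The only conceptual hurdle is the sign and normalization of $\evector$ (easily handled by the flip $\Gamma' \mapsto -\Gamma'$), and the recognition that $\gamma_2(\Delta_i) = 1$ so that the Hadamard product with $\Delta_i$ does not inflate the norm of $\Gamma'$. Having verified all four constraints, the witness $\Gamma_m$ is feasible for \defref{def:mult_exact} at ratio $c = 1+\gamma$ and achieves objective value $\ln(1+\lambda\gamma d)/\ln(1+\gamma)$, yielding the advertised lower bound on $\MADV(\sigma)$.
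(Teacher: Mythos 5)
Your proof is correct and follows essentially the same route as the paper's: apply \claimref{claim:witness} to get a structured $\Gamma'$ with $\norm{\Gamma'}=d$ and $\Gamma'\circ(J-\sigma)=\lambda\Gamma'$, take $\evector$ as the top eigenvector (flipping sign if needed), set $\Gamma_m = I + \gamma(dI-\Gamma')$, and check the four properties by direct computation, with the multiplicative constraint handled exactly as in the proof of \claimref{claim:madvadv-limit}. The one small departure is how you establish $dI - \Gamma'\circ\Delta_i \succeq 0$: you bound $\norm{\Gamma'\circ\Delta_i}\le\gamma_2(\Delta_i)\norm{\Gamma'}$ via Property~1 of \claimref{claim:hadamard-product-properties} and compute $\gamma_2(\Delta_i)=1$, whereas the paper writes $dI-\Gamma'\circ\Delta_i = (dI-\Gamma')\circ\Delta_i$ and invokes Property~2 (Schur product of PSD is PSD), which is marginally more economical since it yields only the one-sided bound that is actually needed; both are correct.
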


\begin{proof}
From Claim~\ref{claim:witness}, there exists a witness $\Gamma$ witnessing $\blah(\sigma)\geq\lambda d$ such that $\norm{\Gamma}=d$. Let $\evector$ be the principal eigenvector of $\Gamma$, and $\Gamma_m=I+\gamma(dI-\Gamma)$. Note that we may assume without loss of generality that $\evector$ corresponds to a positive eigenvalue of $\Gamma$. Therefore, we have $\Gamma_m\succeq I$ and  $\Tr(\Gamma_m \evector\evector^*)=1$.  
As $\Gamma \circ (J-\sigma)=\lambda \Gamma$, it follows that $\evector$ is also a principal
eigenvector of $\Gamma \circ (J - \sigma)$, and the objective value achieved by $\Gamma$ 
is $\Tr(\Gamma((J-\sigma) \circ \evector\evector^*))=\lambda d$.  Thus 
$\Tr(\Gamma (\sigma \circ \evector\evector^*))=(1-\lambda)d$ and 
$\Tr(\Gamma_m (\sigma \circ \evector\evector^*))=1+\lambda \gamma d$.    
The third condition follows from $-d I\preceq\Gamma\preceq d I$. 

The fact that the condition $c^{-1}\Gamma_m\preceq \Gamma_m \circ \Delta_i\preceq c\Gamma_m$ is satisfied for $c=1+\gamma$ follows by the same argument as in the proof of Claim~\ref{claim:madvadv-limit}.
%
\end{proof}

Taking $\gamma=1/(d\lambda)$ gives the following corollary.
\begin{corollary}
Suppose that $(J-\sigma) \circ (J-\sigma)=\lambda (J-\sigma)$. Then,
 \[
\MADV(\sigma) \ge \lambda \frac{\blah(\sigma)}{2} .
\]
\end{corollary}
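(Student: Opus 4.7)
The plan is to substitute the specific parameter choice $\gamma = 1/(d\lambda)$ directly into the quantitative bound furnished by Claim~\ref{claim:madvadv}, exactly as the paragraph preceding the corollary prescribes. Everything hard has already been done in building the multiplicative witness $\Gamma_m = I + \gamma(dI - \Gamma)$ in Claim~\ref{claim:madvadv}; what remains is a one-line algebraic reduction plus a standard logarithmic inequality.

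First I would observe that with the choice $\gamma = 1/(d\lambda)$ the product $\lambda \gamma d$ collapses to exactly $1$, so the numerator of the bound in Claim~\ref{claim:madvadv} simplifies to
\[
\ln(1 + \lambda \gamma d) \;=\; \ln 2.
\]
The denominator is $\ln(1+\gamma) = \ln(1 + 1/(d\lambda))$. To lower bound the overall ratio I would apply the elementary inequality $\ln(1+x) \le x$ for $x > 0$, which preserves direction when pushed into a denominator (making the denominator smaller can only increase the ratio). This yields
\[
\MADV(\sigma) \;\ge\; \frac{\ln 2}{\ln\!\bigl(1 + \tfrac{1}{d\lambda}\bigr)} \;\ge\; (\ln 2)\,d\lambda \;=\; (\ln 2)\cdot \lambda\cdot \frac{\blah(\sigma)}{\gamma_2(J-\sigma)}.
\]

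To reach the stated form $\MADV(\sigma) \ge \lambda \blah(\sigma)/2$, I would finally absorb the factor $\gamma_2(J-\sigma)$ into the leading constant: since $\ln 2 > 1/2$, the inequality follows provided $\gamma_2(J-\sigma)$ is controlled by the relevant constant in the regime of interest (which is the case for both target matrices used downstream, $F$ and $\sigma_f$, by a direct construction of factorizations of the $\{0,\lambda\}$-valued matrix $J-\sigma$). The main ``obstacle,'' in other words, is purely the constant-factor bookkeeping in this last step — all the genuine work on comparing the two adversary bounds was done in Claim~\ref{claim:witness} (which produces a norm-controlled additive witness satisfying $\Gamma' \circ (J-\sigma) = \lambda\Gamma'$) and Claim~\ref{claim:madvadv} (which turns that witness into the multiplicative witness $\Gamma_m$ and tracks the constraint $c^{-1}\Gamma_m \preceq \Gamma_m \circ \Delta_i \preceq c\Gamma_m$ for $c = 1+\gamma$).
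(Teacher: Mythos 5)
Your proposal follows the paper's prescribed route exactly: substitute $\gamma=1/(d\lambda)$ into the bound from \claimref{claim:madvadv}, observe the numerator becomes $\ln 2$, and apply $\ln(1+x)\le x$ to the denominator, yielding
\[
\MADV(\sigma)\;\ge\;\frac{\ln 2}{\ln\bigl(1+\tfrac{1}{d\lambda}\bigr)}\;\ge\;(\ln 2)\,\lambda d\;=\;(\ln 2)\,\frac{\lambda\,\blah(\sigma)}{\gamma_2(J-\sigma)}.
\]
Up to here everything is right, and you correctly recognize that this is not literally $\lambda\,\blah(\sigma)/2$: since $\ln 2\approx 0.693$, the stated constant requires $\gamma_2(J-\sigma)\le 2\ln 2\approx 1.386$.

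The gap is in your final ``absorption'' step, and in particular in the parenthetical claim that $\gamma_2(J-\sigma)$ is controlled for both downstream targets $F$ and $\sigma_f$. That is false for $\sigma_f$: since $J-\sigma_f=2(J-F)$ and $\gamma_2(J-F)=1$ for a non-constant boolean $f$, one has $\gamma_2(J-\sigma_f)=2>2\ln 2$; similarly $\gamma_2(J-F)$ can be as large as $2$ for non-boolean alphabets. Moreover, for any \emph{fixed} $\gamma>0$ the bound from \claimref{claim:madvadv} is strictly below $\lambda d$, which equals $\lambda\,\blah(\sigma)/2$ precisely when $\gamma_2(J-\sigma)=2$, so no choice of $\gamma$ bounded away from $0$ can recover the stated constant in that case. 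What the argument honestly proves is $\MADV(\sigma)\ge(\ln 2)\,\lambda d\ge\lambda d/2=\lambda\,\blah(\sigma)/(2\gamma_2(J-\sigma))$, with $c=1+\gamma_2(J-\sigma)/(\lambda\,\blah(\sigma))$. To be fair, the paper's own one-line justification carries the same imprecision (and the remark that $c=1+1/(\lambda\,\blah(\sigma))$ tacitly assumes $\gamma_2(J-\sigma)=1$); the quantitative statements that actually feed into the results, namely \thmref{thm:product_sigma}, \thmref{thm:SDPT-multiplicative} and \claimref{claim:sdpt-phase-boolean}, track $\gamma_2(J-\sigma)$ explicitly and are unaffected. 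So your proof is essentially the intended one; just replace the unsupported claim that $\gamma_2(J-\sigma)\le 2\ln 2$ in the relevant cases with the correct conclusion $\MADV(\sigma)\ge\lambda d/2$, and do not assert the absorption holds for $\sigma_f$.
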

Note that in this statement $\MADV(\sigma)$ is proved with $c=1+1/(\lambda\cdot \blah(\sigma))$, which is what we need for the strong direct product theorem.

Now we have shown that the multiplicative bound is a constant fraction of the additive bound 
in the exact case.  Thus the same will be true with respect to any 
output condition.

\section{Strong direct product theorem}
We first prove the following theorem, which will lead to both the strong direct product theorem and the XOR lemma in the boolean case.
\begin{theorem}
\label{thm:product_sigma}
Let $\sigma$ be a Gram matrix for a state generation problem satisfying $(J-\sigma) \circ (J-\sigma)=\lambda (J-\sigma)$ for some $\lambda>0$, and let $d=\gamma_2(J-\sigma)^{-1}\blah(\sigma)$.  Then for any 
$\gamma > 0$ 
\[
Q_{1-\delta^k}^c(\sigma^{\otimes k}) \geq 
\frac{k\ln \left(\delta \tfrac{1+2\gamma d}{1+\gamma d(2-\lambda)}\right)}{2\ln (1+\gamma)} \enspace .
\]
\end{theorem}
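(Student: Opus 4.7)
The plan is to build a multiplicative adversary witness for $\sigma^{\otimes k}$ by tensoring the witness $\Gamma_m$ for $\sigma$ given by \claimref{claim:madvadv}, and then use the fidelity-based output condition of \claimref{claim:fidelity} together with \corref{cor:expectation-fidelity} to control the value against arbitrary valid output Gram matrices. Concretely, first I would invoke \claimref{claim:madvadv} to obtain $\Gamma_m$ and a vector $\evector$ with spectrum in $[1, 1+2\gamma d]$, satisfying $\Tr(\Gamma_m\evector\evector^*)=1$, $\Tr(\Gamma_m(\sigma\circ\evector\evector^*))=1+\lambda\gamma d$, and $c^{-1}\Gamma_m\preceq\Gamma_m\circ\Delta_i\preceq c\,\Gamma_m$ with $c=1+\gamma$. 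Set $\Gamma_M:=\Gamma_m^{\otimes k}$ and $u:=\evector^{\otimes k}$; this will be our candidate witness for $\MADV(\sigma^{\otimes k})$.

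Next I would verify that $\Gamma_M$ is feasible for \defref{def:mult_exact} on $\sigma^{\otimes k}$. The query indicator matrices for $\sigma^{\otimes k}$ have the form $\Delta_{i,l}$ (acting trivially on the $k-1$ copies other than $l$), so
\begin{equation*}
\Gamma_M\circ\Delta_{i,l}
=\Gamma_m^{\otimes(l-1)}\otimes(\Gamma_m\circ\Delta_i)\otimes\Gamma_m^{\otimes(k-l)},
\end{equation*}
and the per-copy inequality on $\Gamma_m$ immediately lifts, using positivity of $\Gamma_m$, to the same $c=1+\gamma$ bound on $\Gamma_M\circ\Delta_{i,l}$. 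The normalization $\Tr(\Gamma_M uu^*)=(\Tr(\Gamma_m\evector\evector^*))^k=1$ is automatic.

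The heart of the argument is the output-condition step. For any valid output Gram matrix $\rho$ for $\sigma^{\otimes k}$ with error $1-\delta^k$, \claimref{claim:fidelity} gives $\F(\rho\circ uu^*,\sigma^{\otimes k}\circ uu^*)\ge \delta^{k/2}$. Diagonalize $\Gamma_m=\sum_j\mu_j\ket{j}\bra{j}$ and consider the product POVM $\{\ket{j_1\dots j_k}\bra{j_1\dots j_k}\}$ on the $k$ tensor factors. Measuring $\sigma^{\otimes k}\circ uu^*=(\sigma\circ\evector\evector^*)^{\otimes k}$ yields the product distribution $p^{\otimes k}$ whose random variable $A$ takes value $\mu_j$ with probability $\bra{j}(\sigma\circ\evector\evector^*)\ket{j}$; measuring $\rho\circ uu^*$ yields some distribution $q$ on products $\prod_l A_l$. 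Data processing of quantum fidelity under measurement gives $\F(q,p^{\otimes k})\ge\delta^{k/2}$. By construction, $A$ takes values in $[1,1+2\gamma d]$ with mean $\bar a=1+\lambda\gamma d$, so \corref{cor:expectation-fidelity} yields
\begin{equation*}
\Tr(\Gamma_M(\rho\circ uu^*))=E_q\!\Big[\textstyle\prod_l A_l\Big]
\ge\left(\frac{\delta(1+2\gamma d)}{1+\gamma d(2-\lambda)}\right)^{\!k}.
\end{equation*}

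Finally, since $\Gamma_M$ is feasible with $c=1+\gamma$ and witnesses the above value against every valid $\rho$, $\MADV_{1-\delta^k}(\sigma^{\otimes k})\ge k\ln\!\big(\delta(1+2\gamma d)/(1+\gamma d(2-\lambda))\big)/\ln(1+\gamma)$, and dividing by $2$ to pass to the coherent bounded-error query complexity gives the claimed inequality. The main subtlety I expect is the output-condition step: one has to be careful that $\rho\circ uu^*$ and $\sigma^{\otimes k}\circ uu^*$ really are density matrices (which holds because the diagonals of $\rho$ and $\sigma^{\otimes k}$ are all ones), and that picking the particular $u=\evector^{\otimes k}$ is legitimate (which holds because the fidelity condition in \claimref{claim:fidelity} is a \emph{min} over unit vectors). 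Once those points are checked, the rest is bookkeeping on the parameters $a_0,a_1,\bar a$.
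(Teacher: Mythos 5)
Your proposal is correct and follows essentially the same route as the paper's own proof: tensor the multiplicative witness $\Gamma_m^{\otimes k}$ from \claimref{claim:madvadv}, lift the per-copy spectral constraints to $\sigma^{\otimes k}$, pick $u=\evector^{\otimes k}$ in the fidelity output condition of \claimref{claim:fidelity}, and invoke \corref{cor:expectation-fidelity} with $a_0=1$, $a_1=1+2\gamma d$, $\bar a=1+\lambda\gamma d$. You spell out the POVM/data-processing step a bit more explicitly than the paper, but the argument and the parameter bookkeeping are the same.
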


\begin{proof}
Let $\evector,\Gamma_m$ satisfy the conditions in \claimref{claim:madvadv}.  
As a witness for $\sigma^{\otimes k}$ we 
take $\Gamma_m^{\otimes k}$.  Let us first see that this matrix satisfies the multiplicative constraint 
with the same value $c=1+\gamma$.  

We label the constraint matrices $\Delta_{p,q}$ for $\sigma^{\otimes k}$ by 
$p \in [k]$ and $q \in [n]$.  These are $|\D|^k$-by-$|\D|^k$ matrices where 
$\Delta_{p,q}((x^1, \ldots, x^k), (y^1, \ldots, y^k))=\delta_{x^p_q,y^p_q}$.  In other words, 
$\Delta_{p,q}=J^{\otimes p-1} \otimes \Delta_q \otimes J^{\otimes k-p}$.  Thus 
$\Gamma^{\otimes k} \circ \Delta_{p,q}=\Gamma_m^{\otimes p-1} \otimes 
\Gamma_m \circ \Delta_q \otimes \Gamma_m^{\otimes k-p}$. 
Since $c^{-1}\Gamma_m \preceq \Gamma_m \circ \Delta_q \preceq c\ \Gamma_m$ for all 
$p \in [n]$, and obviously $c^{-1}\Gamma_m \preceq \Gamma_m \preceq c\ \Gamma_m$ for 
$c>1$, we immediately have
\begin{align*}
 c^{-1}\Gamma_m^{\otimes k} \preceq \Gamma_m^{\otimes k} \circ \Delta_{p,q}\preceq c\ 
 \Gamma_m^{\otimes k}
\end{align*}
for any $p \in [k], q \in [n]$.

To lower bound the objective value we lower bound
\[
\MADV_{1-\delta^k}(\sigma^{\otimes k}) \ge 
\min_\rho \Tr(\Gamma_m^{\otimes k} (\rho \circ (\evector\evector^*)^{\otimes k})),
\]
where the minimum is over psd matrices $\rho$ such that $\rho\circ I=I$ and
\begin{align*}
 \min_u \F(\rho\circ uu^*,\sigma^{\otimes k}\circ uu^*)\geq \delta^{k/2}.
\end{align*}
In particular, this will hold for $u=\evector^{\otimes k}$ and we can apply 
\corref{cor:expectation-fidelity} with $p$ being the distribution arising from measuring 
$\Gamma_m$ on $\sigma \circ \evector\evector^*$, and $q$ the distribution arising from 
measuring $\Gamma_m^{\otimes k}$ on $\rho \circ (\evector\evector^*)^{\otimes k}$. Since 
$\F(\rho\circ (\evector\evector^*)^{\otimes k},(\sigma\circ \evector\evector^*)^{\otimes k})\geq \delta^{k/2}$, we also have 
$\F(p^{\otimes k},q)\geq\delta^{k/2}$.  The parameters in \corref{cor:expectation-fidelity} are
$a_0=1, a_1=1+2\gamma d$, and $\bar a=1+\lambda \gamma d$ thus 

\begin{align*}
 \Tr(\Gamma_m^{\otimes k}(\rho\circ(\evector\evector^*)^{\otimes k}))\geq \delta^{k}
 \left(\frac{1+2\gamma d}{1+\gamma d(2-\lambda)} \right)^{k}.
\end{align*}
and in turn
\begin{align*}
 \MADV_{1-\delta^k}(\sigma^{\otimes k})\geq
 \frac{k\ln (\delta \tfrac{1+2\gamma d}{1+\gamma d(2-\lambda)})}{\ln (1+\gamma)}.
\end{align*}
\end{proof}

We then obtain the following strong direct product theorem for the quantum query complexity of any function (boolean or not).
\begin{theorem}
\label{thm:SDPT-multiplicative}
For any function $f$, any $(2/3)\leq\delta\leq 1$ and any integer $k > 0$, we have
\begin{align*}
Q_{1-\delta^{k/2}}(f^{(k)}) \geq \frac{k\ln(3\delta/2)}{8} \blah(F).
\end{align*}
\end{theorem}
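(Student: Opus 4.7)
The plan is to apply \thmref{thm:product_sigma} to the target Gram matrix $F(x,y)=\delta_{f(x),f(y)}$ and then convert the resulting coherent bound into a non-coherent one via \claimref{claim:coherently}. First I would verify the hypothesis of \thmref{thm:product_sigma}: since $F$ is $0/1$--valued, $(J-F)$ is also $0/1$--valued, so $(J-F)\circ(J-F)=J-F$, giving $\lambda=1$. I would also bound $\gamma_2(J-F)$: both $J$ and $F=A^*A$ (with $A=\sum_x\ket{f(x)}\bra{x}$) are PSD with diagonal $1$, hence $\gamma_2(J)\le 1$ and $\gamma_2(F)\le 1$, so by the triangle inequality $\gamma_2(J-F)\le 2$. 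Consequently $d=\gamma_2(J-F)^{-1}\blah(F)\ge \blah(F)/2$.

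Next I would note that $F^{\otimes k}$ is exactly the Gram matrix for $f^{(k)}$, so $Q_\eps^c(F^{\otimes k})=Q_\eps^c(f^{(k)})$. Plugging $\lambda=1$ into \thmref{thm:product_sigma} yields
\begin{align*}
Q_{1-\delta^k}^c(f^{(k)}) \;\ge\; \frac{k\ln\!\bigl(\delta\,\tfrac{1+2\gamma d}{1+\gamma d}\bigr)}{2\ln(1+\gamma)}
\end{align*}
for every $\gamma>0$. The natural tuning is $\gamma=1/d$, which makes $\gamma d=1$ and turns the numerator into $k\ln(3\delta/2)$, while $\ln(1+\gamma)=\ln(1+1/d)\le 1/d$ bounds the denominator. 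This gives
\begin{align*}
Q_{1-\delta^k}^c(f^{(k)}) \;\ge\; \frac{k\,d\,\ln(3\delta/2)}{2} \;\ge\; \frac{k\,\ln(3\delta/2)}{4}\,\blah(F),
\end{align*}
using $d\ge \blah(F)/2$.

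Finally, I would pass from coherent to non-coherent query complexity. By \claimref{claim:coherently}, $Q_\eps^c(f^{(k)})\le 2\,Q_{1-\sqrt{1-\eps}}(f^{(k)})$; taking $\eps=1-\delta^k$ gives $1-\sqrt{1-\eps}=1-\delta^{k/2}$, hence $Q_{1-\delta^k}^c(f^{(k)})\le 2\,Q_{1-\delta^{k/2}}(f^{(k)})$. Combining with the display above loses another factor of $2$ and yields exactly
\begin{align*}
Q_{1-\delta^{k/2}}(f^{(k)}) \;\ge\; \frac{k\,\ln(3\delta/2)}{8}\,\blah(F).
\end{align*}
There is no serious obstacle; the only nontrivial choice is the tuning $\gamma=1/d$, which is precisely what optimizes the ratio $\ln((1+2\gamma d)/(1+\gamma d))/\ln(1+\gamma)$ in the regime of interest and simultaneously balances the two factors $\ln(3\delta/2)$ and $d$ that appear after simplification. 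Note that for $\delta=2/3$ the bound degenerates to the trivial $0$, consistent with the range $2/3\le \delta\le 1$.
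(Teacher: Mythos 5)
Your proof is correct and follows exactly the same approach as the paper: apply \thmref{thm:product_sigma} with $\lambda=1$ (from $(J-F)\circ(J-F)=J-F$), bound $\gamma_2(J-F)\le 2$, tune $\gamma=1/d$, and convert coherent to non-coherent via \claimref{claim:coherently}. You merely spell out a few steps (the $\gamma_2$ bound, the $\ln(1+1/d)\le 1/d$ estimate, the square-root relation $1-\sqrt{1-\eps}=1-\delta^{k/2}$) that the paper leaves implicit.
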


\begin{proof}
Notice that $(J-F)\circ(J-F)=J-F$ and $\gamma_2(J-F)\leq 2$. Thus applying \thmref{thm:product_sigma} with $\lambda=1$ and $\gamma=1/d$, we obtain 
\[
Q^c_{1-\delta^k}(F^{\otimes k}) \ge \frac{k\ln(3\delta/2)}{4} \blah(F).
\]
This lower bound is for computing $f^{(k)}$ coherently, and we obtain the lower bound for $f^{(k)}$ using \claimref{claim:coherently}.
\end{proof}

\section{Boolean functions}

\subsection{XOR Lemma}
We now focus on boolean functions.
Before proving the XOR lemma, we prove a strong direct product theorem for the problem of computing a function in the phase.

Let $\sigma_f=2F-J$ be the Gram matrix corresponding to computing a boolean function $f$ in the phase.
\begin{claim}\label{claim:sdpt-phase-boolean}
 Let $d=\blah(F)$. For any $\delta,\gamma$,
\[
Q^c_{1 - \delta^k}(\sigma_f^{\otimes k}) \ge \frac{k\ln (\delta(1+2\gamma d))}{2\ln (1+\gamma)} \enspace .
\]
\end{claim}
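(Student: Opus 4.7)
The plan is to apply Theorem~\ref{thm:product_sigma} directly with $\sigma = \sigma_f = 2F - J$ and show that the parameters collapse to yield the displayed bound. The whole argument reduces to verifying the Hadamard-square hypothesis, computing the two scaling factors $\blah(\sigma_f)$ and $\gamma_2(J - \sigma_f)$, and noting the cancellation that occurs when $\lambda = 2$.

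First I would check the hypothesis of Theorem~\ref{thm:product_sigma}. Since $J - \sigma_f = 2(J - F)$ and $J - F$ is a $\{0,1\}$-valued matrix for boolean $f$, we have
\[
(J - \sigma_f) \circ (J - \sigma_f) = 4\,(J - F) \circ (J - F) = 4(J - F) = 2(J - \sigma_f),
\]
so the hypothesis is satisfied with $\lambda = 2$.

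Next I would identify the constant $d = \gamma_2(J-\sigma_f)^{-1}\blah(\sigma_f)$ of Theorem~\ref{thm:product_sigma} with the $\blah(F)$ named in the claim. Because the additive objective is linear in the target gap while the constraints on $\Gamma$ do not involve $\sigma$, the scaling $J-\sigma_f = 2(J-F)$ gives $\blah(\sigma_f) = 2\blah(F)$; by homogeneity of $\gamma_2$, we also have $\gamma_2(J-\sigma_f) = 2\gamma_2(J-F)$. For boolean $f$ the factorization $u_x = e_{f(x)}$, $v_y = e_{1-f(y)}$ has unit row norms and reproduces $(J-F)(x,y)$, so $\gamma_2(J-F) \le 1$; combined with the trivial lower bound $\gamma_2(J-F) \ge \max_{x,y}|(J-F)_{x,y}| = 1$, this gives $\gamma_2(J-F) = 1$. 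Hence the two $d$'s agree: $d = \blah(F)$.

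Finally, substituting $\lambda = 2$ into the conclusion of Theorem~\ref{thm:product_sigma} kills the denominator $1 + \gamma d(2-\lambda)$ inside the logarithm, leaving exactly
\[
Q^c_{1 - \delta^k}(\sigma_f^{\otimes k}) \ge \frac{k \ln\bigl(\delta(1 + 2\gamma d)\bigr)}{2\ln(1 + \gamma)},
\]
as claimed. There is no real obstacle in this argument; the whole point of isolating it is to record that the ``phase'' identity $(J - \sigma_f)\circ(J-\sigma_f) = 2(J-\sigma_f)$ doubles the effective $\lambda$ compared to the functional Gram matrix $F$ (where one only gets $\lambda = 1$, as used in Theorem~\ref{thm:SDPT-multiplicative}), which is precisely the improvement that will be needed to derive the XOR lemma from the parity-in-the-phase problem.
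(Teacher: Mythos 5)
Your proof is correct and follows the same route as the paper: both apply \thmref{thm:product_sigma} to $\sigma_f = 2F - J$, verify the Hadamard-square condition with $\lambda = 2$, and observe that $d = \gamma_2(J-\sigma_f)^{-1}\blah(\sigma_f) = \blah(F)$ so the denominator $1 + \gamma d(2-\lambda)$ collapses to $1$. You simply supply the short calculations (homogeneity of $\gamma_2$, the explicit rank-one factorization giving $\gamma_2(J-F)=1$, linearity of the $\blah$ objective in $J-\sigma$) that the paper states without justification.
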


\begin{proof}
Notice that $J-\sigma_f=2(J-F)$, therefore $(J-\sigma_f)\circ(J-\sigma_f)=2(J-\sigma_f)$, $\gamma_2(J-\sigma_f)=2$ and $\blah(\sigma_f)=2\blah(F)$. The claim then follows from \thmref{thm:product_sigma}
with $\lambda=2$.
\end{proof}

Setting $\gamma=1/(\delta d)$, we immediately obtain the strong direct product theorem for $\sigma_f$.
\begin{corollary}
For any $\delta$,
\[
Q^c_{1 - \delta^k}(\sigma_f^{\otimes k}) \ge \frac{k\delta}{4} \blah(F) \enspace .
\]
\end{corollary}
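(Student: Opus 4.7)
The plan is to use Claim~4.2 (\texttt{claim:sdpt-phase-boolean}) as a black box and optimize the free parameter $\gamma$ appearing in that bound. Specifically, the claim gives, for every $\gamma>0$,
\[
Q^c_{1-\delta^k}(\sigma_f^{\otimes k}) \ge \frac{k\ln(\delta(1+2\gamma d))}{2\ln(1+\gamma)},
\]
where $d=\blah(F)$. A natural way to make the numerator constant-order while keeping the denominator as small as possible is to set $\gamma d$ proportional to $1/\delta$, so that $\gamma d$ cancels against $\delta$ inside the logarithm.

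Concretely, I would substitute $\gamma = 1/(\delta d)$. The numerator then becomes $k\ln(\delta + 2)$, which is bounded below by $k\ln 2$ for every $\delta\ge 0$, and the denominator becomes $2\ln\!\left(1+\tfrac{1}{\delta d}\right)$. Using the elementary inequality $\ln(1+x)\le x$ for $x\ge 0$, the denominator is at most $2/(\delta d)$. Combining these two estimates gives
\[
Q^c_{1-\delta^k}(\sigma_f^{\otimes k}) \;\ge\; \frac{k\ln 2}{2/(\delta d)} \;=\; \frac{k\,\delta\, d\,\ln 2}{2}.
\]
Since $\ln 2 \ge 1/2$, this is at least $k\delta d/4 = (k\delta/4)\,\blah(F)$, which is the desired bound.

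The proof is essentially a one-line optimization, so there is no real obstacle; the only point to check is that the inequalities $\ln(1+x)\le x$ and $\ln(\delta+2)\ge \ln 2$ are both valid for all $\delta\in[0,1]$, which they are. Note that the choice $\gamma=1/(\delta d)$ is the ``right'' one in the sense that it keeps $c = 1+\gamma = 1+\Theta(1/(\delta\cdot \blah(F)))$ bounded away from $1$ by a quantity depending only on $\delta$ and $\blah(F)$, exactly the regime where the multiplicative adversary gives a strong (as opposed to merely direct-sum) bound. Finally, because the bound comes from Claim~4.2, which is stated for the coherent complexity of $\sigma_f^{\otimes k}$, no further conversion step is needed --- the statement itself is already about $Q^c$.
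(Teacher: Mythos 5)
Your proof is correct and takes exactly the same route as the paper: the paper also sets $\gamma=1/(\delta d)$ in Claim~\ref{claim:sdpt-phase-boolean} and leaves the resulting one-line computation (numerator $\ge k\ln 2$, denominator $\le 2/(\delta d)$ via $\ln(1+x)\le x$, then $\ln 2 \ge 1/2$) implicit. Your writeup just spells out what the paper calls ``immediate.''
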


Let $\oplus \circ f^{(k)}$ be the function computing the parity of $k$ independent copies of $f$. Since computing $\oplus \circ f^{(k)}$ in the phase is the same as generating the state $\sigma_f^{\otimes k}$, we obtain the XOR lemma from the strong direct product theorem for $\sigma_f$ 
and~\claimref{claim:computing-phase}.
\begin{corollary}[XOR Lemma]
For any boolean function $f$, any $0\leq\delta\leq 1$ and any integer $k>0$,
\[
Q_{(1 - \delta^{k/2})/2}(\oplus \circ f^{(k)}) \ge \frac{k\delta}{8} \blah(F) \enspace .
\]
\end{corollary}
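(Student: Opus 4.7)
The plan is to chain together two facts that have already been established in the excerpt: the strong direct product theorem for the state generation problem $\sigma_f^{\otimes k}$ (the corollary immediately above the XOR Lemma), and the relation between the coherent complexity of computing a boolean function in the phase and its standard query complexity (\claimref{claim:computing-phase}). The key observation that glues them together is that computing the parity $\oplus\circ f^{(k)}$ in the phase is \emph{literally the same} state generation problem as generating $\sigma_f^{\otimes k}$, since
\[
 \sigma_{\oplus\circ f^{(k)}}\bigl((x^1,\ldots,x^k),(y^1,\ldots,y^k)\bigr)
 = \prod_{i=1}^k (-1)^{f(x^i)+f(y^i)}
 = \sigma_f^{\otimes k}\bigl((x^1,\ldots,x^k),(y^1,\ldots,y^k)\bigr).
\]

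First I would invoke the preceding corollary, which yields
\[
 Q^c_{1-\delta^k}(\sigma_f^{\otimes k}) \ge \frac{k\delta}{4}\,\blah(F).
\]
Second, since $\oplus\circ f^{(k)}$ is itself a boolean function, I apply the upper bound in \claimref{claim:computing-phase} to it, with $\epsilon = 1-\delta^k$. Because $\sqrt{1-\epsilon}=\delta^{k/2}$, this gives
\[
 Q^c_{1-\delta^k}(\sigma_f^{\otimes k})
 = Q^c_{1-\delta^k}(\sigma_{\oplus\circ f^{(k)}})
 \le 2\, Q_{(1-\delta^{k/2})/2}(\oplus\circ f^{(k)}).
\]

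Combining the two displays and dividing by $2$ yields exactly the claimed bound
\[
 Q_{(1-\delta^{k/2})/2}(\oplus\circ f^{(k)}) \ge \frac{k\delta}{8}\,\blah(F).
\]
There is essentially no obstacle here: all the real work was carried out in \thmref{thm:product_sigma} (applied with $\lambda=2$ to the matrix $J-\sigma_f=2(J-F)$) and in the translation between coherent phase computation and standard query complexity in \claimref{claim:computing-phase}. The only point requiring a moment's care is the error bookkeeping, namely checking that $\sqrt{1-(1-\delta^k)}=\delta^{k/2}$ lines up the error parameter of the coherent phase-generation bound with that of the standard query bound in the statement of the XOR lemma. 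Once that is verified, the corollary follows immediately.
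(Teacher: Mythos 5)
Your proposal is correct and follows exactly the same route as the paper: identify $\sigma_{\oplus\circ f^{(k)}}$ with $\sigma_f^{\otimes k}$, apply the preceding corollary $Q^c_{1-\delta^k}(\sigma_f^{\otimes k})\ge \tfrac{k\delta}{4}\blah(F)$, and convert via the upper bound of \claimref{claim:computing-phase} with $\epsilon=1-\delta^k$. Your error bookkeeping $\sqrt{1-\epsilon}=\delta^{k/2}$ is precisely what makes the parameters line up, just as in the paper.
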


\subsection{Threshold and strong direct product theorems}
Finally, we prove a threshold direct product theorem.  This will follow from 
\claimref{claim:sdpt-phase-boolean} together with the following threshold 
lemma~\cite[Lemma 2]{Unger09}.

\begin{lemma}[\cite{Unger09}]\label{lem:threshold-lemma}
Let $Y_1, \ldots, Y_k\in\{-1,+1\}$ be random variables, $-1 \le \beta \le 1$ and $C > 0$ be such that 
\[
\Ex\left[\prod_{i \in S} Y_i \right] \le C \beta^{|S|}
\]
for all $S \subseteq [k]$. Let $\lambda$ be such that $\beta\leq\lambda\leq 1$.
Then
\[
\Pr\left[\sum_{i=1}^k Y_i \ge \lambda k\right] \le C e^{-kD(1/2+\lambda/2 || 1/2 + \beta/2)}.
\]
\end{lemma}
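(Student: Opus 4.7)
The plan is to apply the Chernoff--Cramer exponential Markov method, with the hypothesis on the ``Fourier coefficients'' $\Ex[\prod_{i\in S}Y_i]$ playing the role that independence would play in the standard argument. First, for any $t>0$, Markov's inequality applied to the exponential gives
\[
\Pr\Big[\sum_{i=1}^k Y_i \ge \lambda k\Big] \;\le\; e^{-t\lambda k}\,\Ex\Big[e^{t\sum_i Y_i}\Big],
\]
so everything reduces to bounding the moment generating function of $\sum_i Y_i$.

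To bound the MGF, I would use that $Y_i \in \{-1,+1\}$ implies $e^{tY_i} = \cosh(t) + Y_i\sinh(t)$, expand the product of these affine expressions, and interchange the expectation with the sum:
\[
\Ex\Big[\prod_i e^{tY_i}\Big] \;=\; \sum_{S\subseteq [k]} \cosh(t)^{k-|S|}\sinh(t)^{|S|}\,\Ex\Big[\prod_{i\in S} Y_i\Big].
\]
For $t>0$ each prefactor $\cosh(t)^{k-|S|}\sinh(t)^{|S|}$ is strictly positive, so the one-sided hypothesis $\Ex[\prod_{i\in S}Y_i]\le C\beta^{|S|}$ can be applied termwise even when $\beta$ is negative, and the resulting sum collapses by the binomial theorem:
\[
\Ex\Big[\prod_i e^{tY_i}\Big] \;\le\; C\sum_S \cosh(t)^{k-|S|}(\beta\sinh(t))^{|S|} \;=\; C\bigl(\cosh(t)+\beta\sinh(t)\bigr)^k.
\]

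The last step is to optimize $t>0$. Setting $q=(1+\beta)/2\in[0,1]$, one has $\cosh(t)+\beta\sinh(t)=qe^t+(1-q)e^{-t}$, and combining with Markov gives
\[
\Pr\Big[\sum_i Y_i \ge \lambda k\Big] \;\le\; C\cdot\Big(\inf_{t>0} e^{-t\lambda}\bigl(qe^t+(1-q)e^{-t}\bigr)\Big)^k.
\]
The quantity in parentheses is exactly the MGF of a single $\pm 1$-valued Bernoulli with mean $\beta$, so the bracketed infimum is the classical Cramer rate function. A one-line calculus check (differentiating in $t$ yields $e^{2t^\star}=(1-\beta)(1+\lambda)/[(1+\beta)(1-\lambda)]$, which is $\ge 1$ precisely because the hypothesis $\beta\le\lambda$ guarantees $t^\star\ge 0$) identifies the infimum with $\exp\!\bigl(-D(\tfrac{1+\lambda}{2}\,\|\,\tfrac{1+\beta}{2})\bigr)$, which is the stated rate. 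There is no real obstacle in this proof; the only step requiring a moment's care is observing that the strict positivity of $\sinh(t)$ for $t>0$ is what allows the one-sided Fourier hypothesis to be applied for every $S$ simultaneously, so the argument goes through uniformly in the sign of $\beta$.
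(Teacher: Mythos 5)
The paper does not supply its own proof of this lemma; it is cited verbatim from Unger's Lemma~2 in~\cite{Unger09}, so there is no internal proof to compare against. Your argument is nevertheless correct and complete, and it is in fact the standard route (and essentially Unger's): expand $e^{tY_i}=\cosh t + Y_i\sinh t$, distribute to get a sum over subsets $S$ weighted by $\cosh(t)^{k-|S|}\sinh(t)^{|S|}$, use the positivity of these weights for $t>0$ to apply the one-sided hypothesis termwise (this is exactly the point you rightly flag as the crux when $\beta$ may be negative), collapse by the binomial theorem to $(\cosh t+\beta\sinh t)^k=(qe^t+(1-q)e^{-t})^k$ with $q=(1+\beta)/2$, and then optimize. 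Your calculus for $t^\star$ is right, the hypothesis $\lambda\ge\beta$ is precisely what makes $t^\star\ge 0$, and the boundary case $\lambda=\beta$ gives $t^\star\to 0^+$ where the bound degenerates to $\Pr\le C$, consistent with $C\ge 1$ (which follows by taking $S=\emptyset$). No gap.
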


\begin{theorem}
\label{tdpt}
For any function $f$, any $0\leq\delta<1$, any $\mu$ such that $\tfrac{1+\sqrt{\delta}}{2}\leq\mu\leq 1$ and any integers $k,K>0$, let $\P_i(x_1,\ldots,x_k)\in\{-1,1\}$ be the $i$-th output of a $T$-query algorithm for $f^{(k)}$, where
\[
 T\leq\frac{k\delta}{K(1-\delta)}\blah(F),
\]
and let $X=\{i\in [k]:\P_i(x_1,\ldots,x_k)=f(x_i)\}$. Then,
\[
  \Pr\left[\abs{X}\geq \mu k\right]\leq e^{\frac{k}{K}-kD\left(\mu||\frac{1+\sqrt{\delta}}{2}\right)}.
\]
\end{theorem}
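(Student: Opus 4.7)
The natural plan is to reduce to \lemref{lem:threshold-lemma} via the bit-to-sign encoding $Y_i\in\{-1,+1\}$ with $Y_i=+1$ iff $i\in X$, so that $\sum_i Y_i = 2|X|-k$ and $\{|X|\geq\mu k\} = \{\sum_i Y_i\geq(2\mu-1)k\}$. Setting $\lambda:=2\mu-1$ and $\beta:=\sqrt{\delta}$, the hypothesis $\mu\geq(1+\sqrt{\delta})/2$ is precisely the threshold lemma's requirement $\beta\leq\lambda\leq 1$. It therefore suffices to verify, for every $S\subseteq[k]$,
\[
\Ex\!\Bigl[\prod_{i\in S} Y_i\Bigr]\;\leq\; e^{k/K}\cdot\delta^{|S|/2},
\]
after which Unger's lemma with $C=e^{k/K}$ and $\beta=\sqrt{\delta}$ immediately delivers the stated concentration bound.

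To derive the product bound for a fixed $S$ of size $s$, observe that forming $\prod_{i\in S}\P_i(x)$ from the outputs of $\P$ yields a $T$-query quantum algorithm for $\oplus\circ f^{(s)}$ whose bias equals $b:=\Ex[\prod_{i\in S} Y_i]$. By the upper-bound direction of \claimref{claim:computing-phase}, this converts into a coherent $2T$-query algorithm for $\sigma_f^{\otimes s}$ with fidelity $b$, i.e.\ error $1-b^2$. Substituting into \claimref{claim:sdpt-phase-boolean} with $d:=\blah(F)$,
\[
2T\;\geq\;Q^c_{1-b^2}(\sigma_f^{\otimes s})\;\geq\;\frac{s\,\ln\!\bigl(b^{2/s}(1+2\gamma d)\bigr)}{2\ln(1+\gamma)},
\]
which solves to $b\leq (1+\gamma)^{2T}/(1+2\gamma d)^{s/2}$ for every $\gamma>0$.

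The crux is to tune $\gamma$ so that this upper bound factors precisely as $e^{k/K}\cdot\delta^{s/2}$. Taking $\gamma:=(1-\delta)/(2d\delta)$ forces $1+2\gamma d = 1/\delta$, contributing the desired $\delta^{s/2}$ in the denominator. With this choice, $(1+\gamma)^{2T}\leq \exp(2\gamma T)=\exp\!\bigl(T(1-\delta)/(d\delta)\bigr)$, and the hypothesis $T\leq\tfrac{k\delta}{K(1-\delta)}\blah(F)$ is calibrated exactly so that $T(1-\delta)/(d\delta)\leq k/K$, giving $(1+\gamma)^{2T}\leq e^{k/K}$. Feeding $C=e^{k/K}$, $\beta=\sqrt{\delta}$, $\lambda=2\mu-1$ into \lemref{lem:threshold-lemma} completes the argument.

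I expect the main obstacle to be this parameter bookkeeping: the bound from \claimref{claim:sdpt-phase-boolean} is a two-parameter family in $(\gamma,\delta')$, and one must pick $\gamma$ so that the multiplicative blow-up $(1+\gamma)^{2T}$ cancels cleanly against the absorbed slack from the hypothesis on $T$, leaving precisely the factors $e^{k/K}$ and $\delta^{s/2}$ that Unger's lemma consumes. A secondary point to keep straight is the factor of $2$ introduced by the bias-to-fidelity conversion in \claimref{claim:computing-phase}; it is this factor which ultimately dictates the $(1-\delta)$ appearing in the denominator of the assumed upper bound on $T$. No further quantum machinery is needed beyond the claims already established in the paper.
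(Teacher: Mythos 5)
Your proposal is correct and follows essentially the same route as the paper's proof: the same $\{-1,1\}$ encoding reducing to Unger's threshold lemma, the same bias bound $\beta_S\le(1+\gamma)^{2T}(1+2\gamma d)^{-|S|/2}$ obtained by chaining \claimref{claim:computing-phase} and \claimref{claim:sdpt-phase-boolean}, and the same calibration $\gamma=(1-\delta)/(2\delta d)$ to turn this into $\beta_S\le e^{k/K}\delta^{|S|/2}$. One small quibble with your closing aside: the $(1-\delta)$ in the denominator of the assumed bound on $T$ is forced by the choice of $\gamma$ that makes $1+2\gamma d = 1/\delta$ (so it is really a feature of $\gamma$), not by the factor of $2$ from the bias-to-fidelity conversion in \claimref{claim:computing-phase}, which in fact cancels against the $2$ in the denominator of $\gamma$ and only affects the constant --- but this commentary point does not touch the correctness of your argument.
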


\begin{proof}
Let $d=\blah(F)$ and, for any $i\in[k]$ and any set $S\subseteq[k]$, let us consider the random variables $Y_i=\P_i(x_1,\ldots,x_k)\cdot f(x_i)\in\{-1,1\}$ and the expectations $\beta_S=E(\prod_{i\in S} Y_i)$. By definition, we have
\begin{align*}
 Q_{(1-\beta_S)/2}(\oplus \circ f^{(\abs{S})})\leq T.
\end{align*}
Moreover, we also have from Claims~\ref{claim:computing-phase} and~\ref{claim:sdpt-phase-boolean}:
\begin{align*}
 Q_{(1-\beta_S)/2}(\oplus \circ f^{(\abs{S})})&\geq \frac{1}{2}Q^c_{1-\beta_S^2}(\sigma_f^{\otimes\abs{S}})
\geq \frac{\ln(\beta_S^2(1+2\gamma d)^{\abs{S}})}{4\ln(1+\gamma)}
\end{align*}
for any $\gamma>0$, which together with the previous inequality leads to
\begin{align*}
 \beta_S\leq (1+\gamma)^{2T}(1+2\gamma d)^{-\abs{S}/2}.
\end{align*}
For $\gamma=(1-\delta)/(2\delta d)$, this implies $\beta_S\leq e^{k/K}\delta^{\abs{S}/2}$.
Using Lemma~\ref{lem:threshold-lemma} with $\beta=\sqrt{\delta}$, $C=e^{k/K}$ and $\lambda=2\mu-1$, we then obtain
\begin{align*}
 \Pr\left[\sum_{i=1}^k Y_i\geq \lambda k\right]\leq e^{\frac{k}{K}-kD\left(\frac{1+\lambda}{2}||\frac{1+\sqrt{\delta}}{2}\right)}.
\end{align*}
The theorem then follows from $\abs{X}=(k+\sum_{i=1}^k Y_i)/2$.
\end{proof}

In the special case $\mu=1$, we obtain the following strong direct product theorem for boolean functions.
\begin{corollary}
For any function $f$, any $0\leq\delta<1$ and any integers $k,K>0$,
\[
Q_{1 - (e^{1/K}(1+\sqrt{\delta})/2)^k}(f^{(k)}) \ge \frac{k\delta}{K(1-\delta)} \blah(F) \enspace .
\]
\end{corollary}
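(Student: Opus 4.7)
The plan is to derive this corollary as a direct specialization of the threshold direct product theorem (\thmref{tdpt}) to the case $\mu = 1$, so that requiring a $\mu$-fraction of correct outputs collapses to requiring all $k$ outputs to be correct.

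First, I would check that $\mu=1$ is a legal choice in \thmref{tdpt}: the hypothesis $\tfrac{1+\sqrt\delta}{2}\le \mu\le 1$ is satisfied for every $0\le \delta<1$. With $\mu=1$, the event $\{|X|\ge \mu k\}$ is the same as $\{|X|=k\}$, i.e.\ the event that the algorithm for $f^{(k)}$ outputs $f(x_i)$ correctly on every coordinate $i$. This is by definition the event that the algorithm is simultaneously correct on all $k$ copies. Hence the probability bound in \thmref{tdpt} upper bounds the joint success probability of any $T$-query algorithm for $f^{(k)}$ whenever $T\le \tfrac{k\delta}{K(1-\delta)}\blah(F)$.

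Next, I would simplify the exponent on the right-hand side. By the definition of the binary relative entropy with the convention $0\ln 0=0$,
\begin{align*}
D\!\left(1\ \middle\|\ \tfrac{1+\sqrt\delta}{2}\right)
&= 1\cdot\ln\!\tfrac{2}{1+\sqrt\delta} + 0\cdot\ln\!\tfrac{0}{1-(1+\sqrt\delta)/2}
= \ln\!\tfrac{2}{1+\sqrt\delta}.
\end{align*}
Therefore
\begin{align*}
e^{k/K - kD(1\,\|\,(1+\sqrt\delta)/2)} = \left(\tfrac{e^{1/K}(1+\sqrt\delta)}{2}\right)^{k},
\end{align*}
which matches the success probability appearing in the corollary.

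Finally, I would phrase the conclusion contrapositively. Combining the previous two points, any quantum algorithm for $f^{(k)}$ making at most $T\le \tfrac{k\delta}{K(1-\delta)}\blah(F)$ queries succeeds on all $k$ inputs with probability at most $\bigl(e^{1/K}(1+\sqrt\delta)/2\bigr)^{k}$, i.e.\ has error at least $1-\bigl(e^{1/K}(1+\sqrt\delta)/2\bigr)^{k}$. Equivalently, $Q_{1-(e^{1/K}(1+\sqrt\delta)/2)^{k}}(f^{(k)})\ge \tfrac{k\delta}{K(1-\delta)}\blah(F)$, which is the claimed bound. Since all the real work (the multiplicative adversary, the output condition via fidelity, the product argument in \thmref{thm:product_sigma}, and the threshold lemma of \cite{Unger09}) is already encapsulated in \thmref{tdpt}, there is no substantive obstacle here; the only thing to be careful about is the correct evaluation of the degenerate relative entropy $D(1\|\cdot)$ using the $0\ln 0=0$ convention.
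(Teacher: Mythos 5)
Your proof is correct and matches the paper's own (one-line) derivation: the paper introduces this corollary simply as ``the special case $\mu=1$'' of \thmref{tdpt}, and you have filled in exactly the details that this entails, namely verifying that $\mu=1$ is admissible, identifying the event $\{\abs{X}\geq k\}$ with full success, evaluating $D(1\,\|\,(1+\sqrt\delta)/2)=\ln\tfrac{2}{1+\sqrt\delta}$ via the $0\ln 0=0$ convention, and translating the resulting success-probability bound into the lower bound on $Q_\epsilon$.
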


\section*{Acknowledgments}
JR acknowledges support by ARO/NSA under grant W911NF-09-1-0569.
TL would like to thank Ben Reichardt for many insightful conversations on these topics. The authors also thank Oded Regev for interesting comments and in particular for suggesting to prove the XOR lemma. After completion of this work, the authors learned that the quantitative version of the result of Ambainis \textit{et al.}~\cite{AMRR11} about the relation between the multiplicative and additive adversary methods, which was the key missing element to prove the strong direct product theorem, was independently proved by Belovs~\cite{Belovs11}.

\bibliography{sdpt}

\end{document}